\newcommand\pgfmathsinandcos[3]{% 
  \pgfmathsetmacro#1{sin(#3)}% 
  \pgfmathsetmacro#2{cos(#3)}% 
} 
\newcommand\LongitudePlane[3][current plane]{%
  \pgfmathsinandcos\sinEl\cosEl{#2} % elevation
  \pgfmathsinandcos\sint\cost{#3} % azimuth
  \tikzset{#1/.estyle={cm={\cost,\sint*\sinEl,0,\cosEl,(0,0)}}}
}
\newcommand\LatitudePlane[3][current plane]{%
  \pgfmathsinandcos\sinEl\cosEl{#2} % elevation
  \pgfmathsinandcos\sint\cost{#3} % latitude
  \pgfmathsetmacro\yshift{\cosEl*\sint}
  \tikzset{#1/.estyle={cm={\cost,0,0,\cost*\sinEl,(0,\yshift)}}} % 
}
\newcommand\DrawLongitudeCircle[2][1]{
  \LongitudePlane{\angEl}{#2}
  \tikzset{current plane/.prefix style={scale=#1}}
   % angle of "visibility" 
  \pgfmathsetmacro\angVis{atan(sin(#2)*cos(\angEl)/sin(\angEl))} %
  \draw[current plane] (\angVis:1) arc (\angVis:\angVis+180:1); 
  \draw[current plane,dashed] (\angVis-180:1) arc (\angVis-180:\angVis:1); 
}
\newcommand\DrawLatitudeCircle[2][1]{
  \LatitudePlane{\angEl}{#2}
  \tikzset{current plane/.prefix style={scale=#1}}
  \pgfmathsetmacro\sinVis{sin(#2)/cos(#2)*sin(\angEl)/cos(\angEl)} 
  % angle of "visibility"
  \pgfmathsetmacro\angVis{asin(min(1,max(\sinVis,-1)))}  
  \draw[current plane] (\angVis:1) arc (\angVis:-\angVis-180:1); 
  \draw[current plane,dashed] (180-\angVis:1) arc (180-\angVis:\angVis:1);
}
\newcommand\projection[5]{ % radius,elevation,longitude,latitude,xname,yname
    \pgfmathsinandcos\sinEl\cosEl{#1}
    \pgfmathsinandcos\sinlon\coslon{#2}
    \pgfmathsinandcos\sinlat\coslat{#3}
    \pgfmathsetmacro#4{\coslon*\coslat}
    \pgfmathsetmacro#5{\sinlon*\sinEl*\coslat+\sinlat*\cosEl}
}
\newcommand\DrawArcNoMark[9][\R]{
    \pgfmathsinandcos\sinEl\cosEl{#2}
    \pgfmathsinandcos\sinxlon\cosxlon{#3}
    \pgfmathsinandcos\sinxlat\cosxlat{#4}
    \pgfmathsinandcos\sinylon\cosylon{#5}
    \pgfmathsinandcos\sinylat\cosylat{#6}
    \pgfmathsinandcos\sinxylat\cosxylat{#4+#6}
    
    \pgfmathsetmacro\arclen{acos(\sinxlat*\sinxylat+\cosxlat*\cosxylat*\cosylon)}
    \pgfmathsinandcos\sinarclen\cosarclen{\arclen}
    \pgfmathsetmacro\zlon{\ifnumcomp{0}{=}{#4}{#3+90}{atan(-\sinylon/(\sinxlat*(\cosxlat*\sinxylat/\cosxylat-\sinxlat*\cosylon)))}}
    %\message{TEST TEST TEST zlon \zlon arclen \arclen \ifnumcomp{0}{>}{\zlon}{\zlon+180}{\zlon}}
    \ifnumcomp{0}{>}{\zlon}{\pgfmathsetmacro\zlon{\zlon+180}}{}
    
    %\pgfmathsetmacro\zlon{#3+90}
    \pgfmathsinandcos\sinzlon\coszlon{\zlon}
    
    \pgfmathsetmacro\zlat{atan(\coszlon*\sinxlat/\cosxlat+\sinzlon*(\sinxylat/\cosxylat-\cosylon*\sinxlat/\cosxlat)/\sinylon)}
    %\pgfmathsetmacro\zlat{#4+#6}
    %\pgfmathsetmacro\zlat{atan((\sinxylat/\cosxylat-\cosylon*\sinxlat/\cosxlat))/\sinylon}
    \pgfmathsinandcos\sinzlat\coszlat{\zlat}
    \projection\angEl{\zlon+#3}\zlat\zx\zy
    \projection\angEl{#3}{#4}\startx\starty

    %\coordinate[mark coordinate] (test) at (\R*\zx,\R*\zy);
    
    \tikzset{newplane/.estyle={cm={
        \startx,%\cosxlon*\cosxlat,
        \starty,%\sinxlon*\sinEl*\cosxlat+\sinxlat*\cosEl,
        \zx,%-\sinxlon*\coszlat,
        \zy,%\cosxlon*\sinEl*\coszlat+\sinzlat*\cosEl,
        (0,0)
    },scale=#1}}
    \draw[newplane,->,line width=0.35mm] (1,0) arc (0:\arclen/2:1) node[below right] {\small #7} arc (\arclen/2:\arclen:1) node[#9 right] {\small #8};% arc (\arclen:360:1);
}
\tikzset{%
  >=latex, % option for nice arrows 
  inner sep=0pt,%
  outer sep=2pt,%
  mark coordinate/.style={inner sep=0pt,outer sep=0pt,minimum size=3pt,fill=black,circle}%
} % For the tikz-drawing
\renewenvironment{proof}[1][\unskip]{\par\noindent{\bf Proof #1\ }}{\hfill\BlackBox\\[2mm]}
\renewcommand*{\vec}{\bm}
\newcommand{\dCC}[0]{d_{\rm \scriptscriptstyle CC}}
\newcommand{\bigO}[0]{\mathcal{O}}
\newcolumntype{L}[1]{>{\raggedright\arraybackslash}p{#1}}
\begin{document}

\title{The Hyperspherical Geometry of Community Detection:\\
Modularity as a Distance}

\author{\name Martijn G\"{o}sgens \email research@martijngosgens.nl \\
       \addr Eindhoven University of Technology, 
       Eindhoven, Netherlands
       \AND
       \name Remco van der Hofstad \email r.w.v.d.hofstad@tue.nl \\
       \addr Eindhoven University of Technology,
       Eindhoven, Netherlands
       \AND
       \name Nelly Litvak \email n.litvak@utwente.nl \\
       \addr University of Twente, 
       Enschede, Netherlands\\
       \addr Eindhoven University of Technology, 
       Eindhoven, Netherlands
       }

%\editor{}

\maketitle

\begin{abstract}
We introduce a metric space of clusterings, where clusterings are described by a binary vector indexed by the vertex-pairs. We extend this geometry to a hypersphere and prove that maximizing modularity is equivalent to minimizing the angular distance to some \emph{modularity vector} over the set of clustering vectors. 
In that sense, modularity-based community detection methods can be seen as a subclass of a more general class of \emph{projection methods}, which we define as the community detection methods that adhere to the following two-step procedure: first, mapping the network to a point on the hypersphere; second, projecting this point to the set of clustering vectors.
We show that this class of projection methods contains many interesting community detection methods. Many of these new methods cannot be described in terms of null models and resolution parameters, as is customary for modularity-based methods.
We provide a new characterization of such methods in terms of \emph{meridians} and \emph{latitudes} of the hypersphere.
In addition, by relating the modularity resolution parameter to the latitude of the corresponding modularity vector, we obtain a new interpretation of the \emph{resolution limit} that modularity maximization is known to suffer from.
\end{abstract}
\begin{keywords}
    Community detection, Louvain algorithm, modularity, clustering, resolution limit
\end{keywords}

\section{Introduction}\label{sec:introduction}
Complex networks often contain groups of nodes that are more connected internally than externally. In network science, such groups are often referred to as \emph{communities}. Community detection is the task of detecting such groups in a network. Many algorithms have been developed for this task and its variants ~\citep{fortunato2010community,fortunato2016community,rosvall2019different}. In this work, we consider detecting non-overlapping communities, so that a community structure is represented by a partition of the network nodes into communities. We refer to such partitions as \emph{clusterings}. We define a hyperspherical geometry on such clusterings that allow for a natural interpretation of widely-used community detection methods in terms of projections. This geometrical interpretation has several interesting consequences, and opens the door for designing new community detection methods.

In the remainder of this section we explain our main contributions and innovations. In order to do so, we start by introducing basic notations and key notions from the literature that we build upon in this work. 
We consider a simple undirected graph $G$ with vertex-set $[n]=\{1,\dots,n\}$. For a vertex $i\in[n]$, we denote its degree by $d_i^{(G)}$ and denote the total number of edges by $m_G=\tfrac{1}{2}\sum_{i\in[n]}d_i^{(G)}$. We use the letters $C$ and $T$ to denote clusterings. 

\paragraph{Granularity.} We denote the number of clusters that a clustering consists of by $|C|$, while $m_C$ denotes the number of intra-cluster vertex-pairs, which is an integer in between 0 (when $|C|=n$) and the total number of vertex-pairs $N={n\choose2}$ (when $|C|=1$). The extent to which a clustering resembles the former or latter of these extremes is often referred to as the \emph{granularity} of a clustering: \emph{fine-grained} clusterings consist of many small clusters while \emph{coarse-grained} clusterings consist of a few large clusters. Granularity is an important notion in this paper, and we use the number of intra-cluster pairs $m_C$ as a measure of the granularity. Alternatively, some works simply quantify granularity by the number of clusters $|C|$, but this has the drawback of being insensitive to the cluster sizes. In other works~\citep{romano2016adjusting,vinh2009information,vinh2010information}, Shannon entropy is used as a measure of granularity. A generalization of Shannon entropy has been shown to be related to $m_C$~\citep{romano2016adjusting}.

\paragraph{Validation indices.} When the ground truth community structure is available, the performance of a community detection method can be evaluated by comparing the obtained candidate clustering to the ground truth. In this work, we denote the ground truth clustering by $T$ while we denote other (candidate) clusterings by $C$.  Functions that quantify similarity of $C$ and $T$ are referred to as \emph{validation indices}. There exist many different validation indices~\citep{vinh2010information,Lei2017} and which one is most suitable depends on the context of the application~\citep{gosgens2021systematic}.

One popular class of validation indices are \emph{pair-counting functions}. Given clusterings $T$ and $C$, these indices can be expressed as functions of the following four variables: the number of intra-cluster pairs $m_T,m_C$ of $T$ and $C$, respectively; the number of vertex-pairs $m_{TC}$ that are intra-cluster pairs of both $T$ and $C$; and the total number of vertex-pairs $N$. Examples of such pair-counting indices are the Rand index~\citep{rand1971objective}, the Jaccard index~\citep{jaccard1912distribution}, the Hubert Index~\citep{Hubert1977} and the Correlation Coefficient~\citep{gosgens2021systematic}.
Many of these indices are known to have a bias towards either clusterings of fine or coarse granularity~\citep{albatineh2006similarity,romano2016adjusting,gosgens2021systematic,Lei2017}. The Jaccard index is known to be biased towards coarse-grained clusterings while the bias of Rand depends on the granularity of the ground truth~\citep{Lei2017}. The Correlation Coefficient does not suffer from such bias and additionally satisfies many other desirable properties~\citep{gosgens2021systematic}.
This correlation coefficient is given by
\begin{equation}\label{eq:CC}
\textrm{CC}(T,C)=\frac{m_{TC}N-m_Tm_C}{\sqrt{m_T\cdot(N-m_T)\cdot m_C\cdot(N-m_C)}}.
\end{equation}
One of the desirable properties of the correlation coefficient that we make grateful use of is that it can be transformed to a metric distance by taking its arccosine~\citep{gosgens2021systematic}.
This \emph{Correlation Distance} plays a central role in this work and is used as the main validation index throughout this paper.

\paragraph{Modularity.} One of the most popular ways to detect communities is by maximizing a quantity called \emph{modularity}~\citep{newman2004finding,reichardt2006statistical}.
Modularity is based on the paradigm that a graph with community structure has many more intra-communities edges than it would have if one were to rewire the edges at random.

Given a clustering $C$ and a graph $G$, modularity measures the difference between the number of intra-cluster edges that are present in $G$ and the expected number of intra-cluster edges if $G$ were to be rewired according to some random graph model without community structure. Such a random graph model is usually referred to as a \emph{null model}. While it is theoretically possible to use any random graph model as a null model, the null models that are commonly used in the literature are the Erd\H{o}s-R\'{e}nyi (ER) model and Configuration Model (CM).
The main result of this paper in Theorem~\ref{thm:modularity_distance} establishes the equivalence between modularity maximization and a nearest neighbor search in hyperspherical geometry.

\paragraph{The resolution limit.} In large graphs, modularity maximization is known to be unable to detect communities below a given size. This \emph{resolution limit}~\citep{Fortunato2007,kumpula2007limited,lancichinetti2011limits} is a serious drawback of modularity-based methods. Basically, it means that modularity maximization is implicitly tuned to detect clusterings of a certain granularity. Modularity is often extended to include a \emph{resolution parameter} to alleviate this problem~\citep{reichardt2006statistical,kumpula2007limited}. However, this resolution parameter merely allows one to tune the detection method to a different granularity~\citep{arenas2008analysis}, and does not address the fundamental problem that modularity implicitly has a bias towards clusterings of some granularity~\citep{kumpula2007limited,traag2011narrow}.
Furthermore, it is nontrivial to find a resolution parameter value so that modularity optimization is tuned to the desired granularity in a given setting~\citep{arenas2008analysis,prokhorenkova2019using}. Recently, modularity optimization has been shown to be equivalent to likelihood maximization~\citep{newman2016equivalence}:
for a particular value of the resolution parameter, ER modularity is equivalent to the likelihood function of a Planted Partition Model while CM modularity is equivalent to the likelihood of a Degree-Corrected Stochastic Block Model. While this provides a mathematically-principled approach to choosing the resolution parameter~\citep{prokhorenkova2019community}, it does require making assumptions about the distribution that the network was drawn from and knowledge of its parameters.

\paragraph{The Louvain algorithm.} Despite its shortcomings, modularity maximization remains one of the most popular approaches for community detection. In applications, an additional difficulty arises from the fact that modularity maximization is NP-hard~\citep{brandes2007modularity}, which makes its exact maximization infeasible for large graphs. Therefore, practitioners often resort to \emph{approximate} optimization algorithms, the most popular being the so-called Louvain algorithm~\citep{blondel2008fast}. This algorithm is known to find an approximate modularity maximum in roughly log-linear time.
The Louvain algorithm is also known to efficiently optimize other partition-based functions~\citep{prokhorenkova2019community}.
Recently, some improvements to the Louvain algorithm have been made, resulting in algorithms of similar running time that reach better local maxima~\citep{traag2019louvain}.
However, these improved algorithms are generally not as easily extended to other partition-based functions. For this reason, we limit ourselves to the Louvain algorithm for the remainder of the paper.

\paragraph{Main innovations in this paper.}
The central idea of this paper is to describe a {\em hyperspherical geometry} on the set of clusterings. Our main result is that in terms of this geometry, maximizing modularity is equivalent to minimizing the angular distance to some point on the hypersphere that we call the {\em modularity vector}, over the set of clusterings.
In this geometric viewpoint, modularity can be seen as a class of mappings (parametrized by the null model and resolution parameter) from the set of networks to points on the hypersphere.
Then, any algorithm that maximizes modularity (e.g., the well-known Louvain algorithm) can be seen as an (approximate) nearest-neighbor search, finding the clustering that minimizes the distance to the modularity vector.
By allowing for different ways of mapping networks to points on the hypersphere, we obtain a general class of community detection methods that we refer to as \emph{projection methods}, as the network is first mapped to a point on the hypersphere and then projected to a clustering vector.
We show that many interesting projection methods exist that lie outside the class of modularity-based methods.

\paragraph{Data clustering.}
Note that community detection is closely related to the more general field of data clustering, as we are essentially clustering the network nodes based on network topology.
While the focus of the present paper is community detection in networks, the proposed method is not limited to networks alone and is able to cluster \emph{any} pair-wise similarity data, such as the upper or lower triangle of any affinity matrix.
For a more general overview of data clustering, we refer to~\cite{jain2010data}.

\paragraph{Organisation of this paper.}
This paper is organised as follows.
In Section~\ref{sec:clustering_geometry}, we describe the hyperspherical geometry on the set of clusters. In Section~\ref{sec:modularity_distance}, we show that maximizing modularity is equivalent to minimizing the angular distance to a modularity vector. Section~\ref{sec:consequences} then discusses the consequences of this equivalence for modularity-based methods, while Section~\ref{sec:query_mappings} explores the projection methods that lie outside the class of modularity-based methods. Finally, in Section~\ref{sec:experiments}, we compare a number of projection methods on real-world networks and interpret the results in our geometric framework.

\section{The hyperspherical geometry of clustering}\label{sec:clustering_geometry}
For a given clustering $C$, we define $\vec{b}(C)$ as the binary $N$-dimensional vector indexed by the vertex-pairs, where $\vec{b}(C)_{ij}$ equals $+1$ if $i$ and $j$ are assigned the same cluster in $C$ and $-1$ otherwise.
In this work, binary vectors have entries $+1$ and $-1$. In some particular cases, we need binary vectors with values $1$ and $0$, which are obtained by the transformation $\tfrac{1}{2}(\vec{b}+\vec{1})$, where $\vec{1}$ denotes the vector where all entries equal $1$.
Note that not all $\pm1$ binary vectors correspond to clusterings, since the vector needs to satisfy transitivity. That is, for vertices $i,j,k$, it must hold that if $\vec{b}(C)_{ij}=\vec{b}(C)_{jk}=1$ then $\vec{b}(C)_{ik}=1$. Furthermore, every binary vector that does satisfy this transitivity condition corresponds to precisely one clustering $C$.
Importantly, note that all binary vectors have equal (Euclidean) length $\sqrt{N}$, hence, all clustering vectors lie on a hypersphere with radius $\sqrt{N}$ centered around the origin.

For clusterings $C_1,C_2$, the \emph{angular distance} between their binary vectors is given by
\begin{equation}\label{eq:angular_distance}
d_a(\vec{b}(C_1),\vec{b}(C_2))=\arccos\left(\frac{\langle\vec{b}(C_1),\vec{b}(C_2)\rangle}{N}\right),
\end{equation}
where $\langle\cdot,\cdot\rangle$ denotes the standard inner product.
We can easily extend this metric space to the full hypersphere to allow for non-binary vectors. For  $\vec{x},\vec{y}\in\mathbb{R}^N$, the angular distance is
\[
d_a(\vec{x},\vec{y})=\arccos\left(\frac{\langle\vec{x},\vec{y}\rangle}{\|\vec{x}\|\cdot\|\vec{y}\|}\right),
\]
where $\|\vec{x}\|=\sqrt{\langle\vec{x},\vec{x}\rangle}$ denotes the Euclidean norm.
Furthermore, the correlation distance, which is defined as the arccosine of the correlation coefficient as given in~\eqref{eq:CC}, can similarly be extended to the full hypersphere by
\begin{equation}\label{eq:CD_full}
    \dCC(\vec{x},\vec{y})=\arccos\left(\frac{\langle\vec{x},\vec{y}\rangle-\langle\vec{x},\vec{1}\rangle\cdot\langle\vec{y},\vec{1}\rangle/N}{\sqrt{\|\vec{x}\|^2-\langle\vec{x},\vec{1}\rangle^2/N}\sqrt{\|\vec{y}\|^2-\langle\vec{y},\vec{1}\rangle^2/N}}\right).
\end{equation}
We note that the length of a vector has no meaning in these metric spaces. That is, scaling a vector by a positive scalar does not affect the (angular or correlation) distance to any other vector. This introduces an equivalence relation among vectors, where each equivalence class corresponds to a direction, the representative element of each of these classes is the vector that is on the surface of the hypersphere, given by the \emph{hypersphere projection}
\[
\mathcal{H}(\vec{x})=\frac{\sqrt{N}}{\|\vec{x}\|}\vec{x}.
\]
All the definitions in the remainder of this section are invariant under this hypersphere projection.

\paragraph{Hyperspherical geometry.} The angular distance defines a geometry on our hypersphere. We now introduce some basic hyperspherical geometry theory that will be needed later on.
These hyperspherical results are direct analogues of their spherical counterparts. We refer to \cite{todhunter1863spherical} and \cite{donnay2011spherical} for a more complete overview of spherical geometry and trigonometry.

\begin{figure}
    \centering
    \resizebox{0.7\textwidth}{!}{%
        \begin{tikzpicture} % CENT

%% some definitions

\def\R{2.5} % sphere radius
\def\angEl{25} % elevation angle
\def\angAz{-115} % azimuth angle
\def\zlong{75}
\def\zlat{30}
% xlong=0
\def\xlat{0}

%% working planes

\pgfmathsetmacro\H{\R*cos(\angEl)} % distance to north pole
\tikzset{xyplane/.style={cm={cos(\angAz),sin(\angAz)*sin(\angEl),-sin(\angAz),
                              cos(\angAz)*sin(\angEl),(0,-\H)}}}
\LongitudePlane[xzplane]{\angEl}{\angAz}
\LongitudePlane[pzplane]{\angEl}{\angAz+\zlong}
\LatitudePlane[equator]{\angEl}{0}

%% draw xyplane and sphere

\draw[xyplane] (-2*\R,-2*\R) rectangle (2*\R,2*\R);
\node[anchor=east] at (-\R,-1.2*\R) {Tangent plane of $\vec{y}$};
%\node [anchor=east]  {Tangent plane of $\mathbf{y}$};
\fill[ball color=white] (0,0) circle (\R); % 3D lighting effect
\draw (0,0) circle (\R);

%% characteristic points

\coordinate[mark coordinate] (O) at (0,0);
\coordinate[mark coordinate] (N) at (0,\H);
\coordinate[mark coordinate] (S) at (0,-\H);
\path[pzplane] (\zlat:\R) coordinate[mark coordinate] (z);
\path[xzplane] (\xlat:\R) coordinate[mark coordinate] (x);
\pgfmathsetmacro\cosz{cos(\zlat)};
\pgfmathsetmacro\cosx{cos(\xlat)};
\path[xyplane] (\zlong:\R*\cosz) coordinate[mark coordinate] (zhat);
\path[xyplane] (0:\R*\cosx) coordinate[mark coordinate] (xhat);

% Show which angles are perpendicular
\pgfmathsetmacro\perplength{\R/7};
\draw[xzplane,thin] (0,-\R) -- (0,\perplength-\R) -- (\perplength,\perplength-\R) -- (\perplength,-\R) -- cycle;
\draw[pzplane,thin] (0,-\R) -- (0,\perplength-\R) -- (\perplength,\perplength-\R) -- (\perplength,-\R) -- cycle;
\draw[pzplane] (zhat) -- ($ (zhat)+ (-\perplength,0) $) -- ($ (zhat)+ (-\perplength,\perplength) $) -- ($ (zhat)+ (0,\perplength) $);
\draw[xzplane] (xhat) -- ($ (xhat)+ (-\perplength,0) $) -- ($ (xhat)+ (-\perplength,\perplength) $) -- ($ (xhat)+ (0,\perplength) $);

%% draw meridians and latitude circles

\DrawLatitudeCircle[\R]{\xlat} % equator
\DrawLatitudeCircle[\R]{\zlat}
\DrawLongitudeCircle[\R]{\angAz} % xzplane
\DrawLongitudeCircle[\R]{\angAz+\zlong} % pzplane

%% draw xyz coordinate system

\draw[xyplane,<->] (0:1.5*\R) -- (0,0) -- (\zlong:1.5*\R);
\draw (0,-\H) -- (0,0) node[below right] {$\vec{0}$};

%% draw lines and put labels

\draw[->] (z)-- (zhat) node[above right] {$\vec{\hat{z}}$};
\draw[->] (x) -- (xhat) node[above left] {$\vec{\hat{x}}$};
\path (S) +(0.4ex,-0.4ex) node[below] {$\vec{y}$};
\draw[->] (O) -- (z) node[above left] {$\vec{z}$};
\draw[->] (O) -- (x) node[below left] {$\vec{x}$};
\draw[xyplane,<->,thin] (1.3*\R,0) arc (0:\zlong/3:1.3*\R) node[anchor=north]{$\angle(\vec{x},\vec{y},\vec{z})$} arc (\zlong/3:\zlong:1.3*\R);

\end{tikzpicture}
    }
    \caption{Illustration of the hyperspherical angle $\angle(\vec{x},\vec{y},\vec{z})$ in the three-dimensional subspace spanned by $\vec{x},\vec{y},\vec{z}$. The solid arcs are on the visible side of the sphere while the dashed arcs are on the back side. The vectors $\hat{\vec{x}}$ and $\hat{\vec{z}}$ are the projections of $\vec{x},\vec{z}$ respectively to the tangent plane of $\vec{y}$. For $\vec{y}=-\vec{1}$, we have $\angle(\vec{x},-\vec{1},\vec{z})=\dCC(\vec{x},\vec{z})$.}
    \label{fig:hyperspherical_angle}
\end{figure}
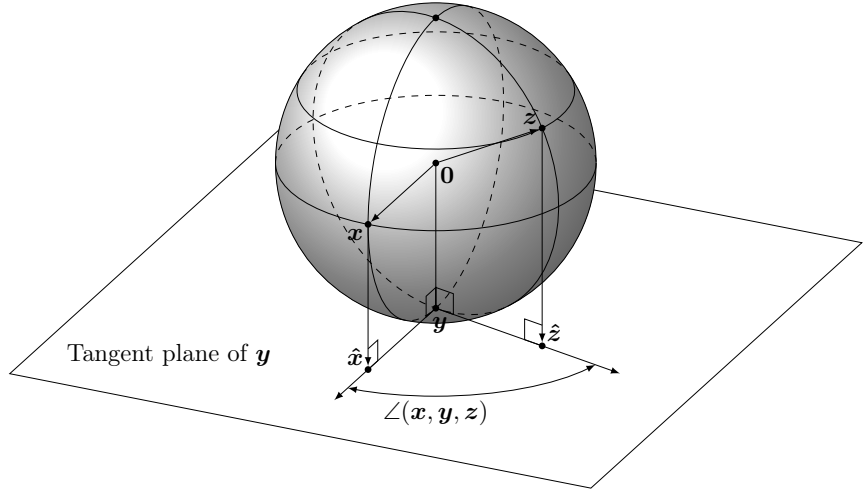

For any two-dimensional plane that contains the origin, its intersection with the hypersphere corresponds to a \emph{great circle} (e.g., the equator of a globe), which is the closest hyperspherical analogue to an infinite straight line in Euclidean geometry. Therefore, we refer to segments of a great circle as \emph{hyperspherical straight lines}. The length of a hyperspherical straight line between two points corresponds to the angular distance between its endpoints.
The definition of a hyperspherical angle (the angle that two hyperspherical straight lines make on the hypersphere) is less straightforward:
given three distinct points $\vec{x},\vec{y},\vec{z}$ on the hypersphere, the hyperspherical angle $\angle(\vec{x},\vec{y},\vec{z})$ that the line from $\vec{x}$ to $\vec{y}$ makes with the line from $\vec{z}$ to $\vec{y}$ at $\vec{y}$ is given by the \emph{hyperspherical cosine rule}:
\begin{equation}\label{eq:hypercosine}
    \cos\angle(\vec{x},\vec{y},\vec{z})=\frac{\cos d_a(\vec{x},\vec{z})-\cos d_a(\vec{x},\vec{y})\cos d_a(\vec{y},\vec{z})}{\sin d_a(\vec{x},\vec{y})\sin d_a(\vec{y},\vec{z})}.
\end{equation}
This angle is obtained by projecting $\vec{x}$ and $\vec{z}$ on the tangent plane of $\vec{y}$ on the hypersphere and then simply computing the angle for these projected points via the (Euclidean) cosine rule, as illustrated in Figure~\ref{fig:hyperspherical_angle}. 
Later in this section, we show that the Correlation Distance is a hyperspherical angle.

\paragraph{Poles.} In the metric space described previously, the all-one vector $\vec{1}$ corresponds to the clustering where all items are put in the same cluster, while the vector $-\vec{1}$ corresponds to the clustering where each item is placed in its own separate cluster. These two points form opposite \emph{poles} on our hypersphere. Clusterings lying close to $\vec{1}$ correspond to clusterings of coarse granularity (i.e., consisting of relatively few and large clusters), while clusters close to $-\vec{1}$ correspond to clusterings of fine granularity. Therefore, we refer to $\vec{1}$ as the \emph{coarse pole} while we refer to $-\vec{1}$ as the \emph{fine pole}.

\paragraph{Latitude and granularity.} In analogy to the terminology used for globes, we refer to the angular distance to the fine pole as the \emph{latitude} of a vector\footnote{Note that this slightly differs from the definition used in geography, where the latitude is the signed angular distance to the equator.}. We thus define the latitude of a vector $\vec{x}$ by
\begin{equation}\label{eq:latitude}
\ell(\vec{x})=d_a(\vec{x},-\vec{1})=\arccos\Big(\frac{-\langle\vec{x},\vec{1}\rangle}{\sqrt{N}\cdot\|\vec{x}\|}\Big).
\end{equation}
If $\vec{x}$ is a binary vector, then \eqref{eq:latitude} depends only on the number of $+1$'s in $\vec{x}$. Therefore, for a clustering $C$ with $m_C$ intra-cluster pairs, the latitude is given by
\begin{equation}\label{eq:clustering_lat}
\ell(\vec{b}(C))=\arccos\left(\frac{-m_C+(N-m_C)}{\sqrt{N}\cdot \sqrt{N}}\right)=\arccos\left(1-\frac{2m_C}{N}\right).
\end{equation}
As mentioned previously, $m_C$ is a measure of the granularity of the clustering $C$. Since $\ell(\vec{b}(C))$ is a monotonous transformation of $m_C$, it can equivalently be viewed as a measure of granularity.

We refer to the \emph{equator} as the set of points at equal distance to both poles, that is, with latitude $\pi/2$, or equivalently, the set of vectors $\vec{x}$ with $\langle\vec{x},\vec{1}\rangle=0$. This contains the clustering vectors for clusterings that have an equal number of inter- and intra-cluster pairs.
From \eqref{eq:clustering_lat} we see that $\ell(\vec{b}(C))>\pi/2$ occurs only when $C$ has more intra-cluster pairs than inter-cluster pairs. Moreover, it can be shown that any clustering with $\ell(\vec{b}(C))>\pi/2$ has a cluster of size larger than $n/2$,  which is usually not the case when clustering into more than two clusters. This tells us that clusterings generally lie on the fine hemisphere.
Furthermore, let $n\rightarrow\infty$ and suppose that the expected community size of a randomly chosen vertex is $s$ and does not depend on $n$, then the latitude of a ground truth community structure shrinks like         
    \begin{equation}\label{eq:clustering_concentration}
    \arccos\left(1-2\frac{s-1}{n-1}\right)=2\sqrt{\frac{s-1}{n-1}}+o(n^{-1}),
    \end{equation}
so that for large $n$, clusterings are concentrated around the fine pole.

\paragraph{Meridians and parallels.} In the same geographic analogy as above, a \emph{meridian} is a hyperspherical straight line between the poles, while a \emph{parallel} is a hypersurface of constant latitude. Note, however, that in the three-dimensional real-world, a parallel is a one-dimensional object while in our case it has dimension $N-2$. 

For every vector $\vec{x}$ that is not a multiple of $\vec{1}$, there is precisely one meridian running through its hypersphere projection. Similarly, every vector $\vec{x}$ lies on exactly one parallel, namely the parallel with latitude $\ell(\vec{x})$.
Therefore, each point on the hypersphere is uniquely defined by a meridian and a latitude. The point that lies on the meridian of $\vec{x}$ and has latitude $\lambda$ is equal to the projection of $\vec{x}$ onto the parallel with latitude $\lambda$. This \emph{parallel projection} is given by
\begin{equation}\label{eq:parallel_projection}
\mathcal{P}_\lambda(\vec{x})=\sin(\lambda)\mathcal{H}\left(\vec{x}-\frac{\langle\vec{x},\vec{1}\rangle}{N}\vec{1}\right)-\cos(\lambda)\vec{1}.
\end{equation}
Since the equator corresponds to the parallel with latitude $\pi/2$, the projection to the equator is given by $\mathcal{P}_{\pi/2}(\vec{x})$.

All meridians meet at both poles. Given two meridians that run through $\vec{x}$ and $\vec{y}$ respectively, we can compute the hyperspherical angle that they make at the fine pole. We refer to this as the \emph{meridian angle} between $\vec{x}$ and $\vec{y}$. For $\vec{x},\vec{y}$ with latitudes in $(0,\pi)$, the meridian angle can be computed via the hyperspherical cosine rule~\eqref{eq:hypercosine} and is given by
\begin{equation}\label{eq:meridian_hypercosine}
    \textrm{MeridianAngle}(\vec{x},\vec{y})=\angle(\vec{x},-\vec{1},\vec{y})=
    \arccos\left(\frac{\cos d_a(\vec{x},\vec{y})-\cos\ell(\vec{x})\cos\ell(\vec{y})}{\sin\ell(\vec{x})\sin\ell(\vec{y})}\right).
\end{equation}
Alternatively, this meridian angle can be written as $d_a(\mathcal{P}_{\pi/2}(\vec{x}),\mathcal{P}_{\pi/2}(\vec{y}))$.
By convention, we define this angle to be zero when both vectors are on the same pole, $\pi$ when both vectors are on opposite poles and $\pi/2$ when one vector is on a pole while the other is not.
We prove the following result:
\begin{theorem}\label{thm:meridian_correlation}
    For all $\vec{x},\vec{y}\in\mathbb{R}^N$, we have ${\rm MeridianAngle}(\vec{x},\vec{y})=\dCC(\vec{x},\vec{y})$. That is, the meridian angle coincides with the correlation distance.
\end{theorem}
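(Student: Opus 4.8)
The plan is to exploit the fact that both $\mathrm{MeridianAngle}(\vec{x},\vec{y})$ and $\dCC(\vec{x},\vec{y})$ are defined as the arccosine of some expression, together with the observation that $\arccos$ is a bijection from $[-1,1]$ onto $[0,\pi]$. Hence it suffices to show that the two \emph{arguments} of the arccosine coincide. My strategy is to rewrite each argument purely in terms of the scalars $\langle\vec{x},\vec{y}\rangle$, $\langle\vec{x},\vec{1}\rangle$, $\langle\vec{y},\vec{1}\rangle$, $\|\vec{x}\|$, $\|\vec{y}\|$ and $N$, and to check that both reduce to the common expression
\begin{equation*}
\frac{N\langle\vec{x},\vec{y}\rangle-\langle\vec{x},\vec{1}\rangle\langle\vec{y},\vec{1}\rangle}{\sqrt{\big(N\|\vec{x}\|^2-\langle\vec{x},\vec{1}\rangle^2\big)\big(N\|\vec{y}\|^2-\langle\vec{y},\vec{1}\rangle^2\big)}}.
\end{equation*}

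First I would treat the generic case in which neither $\vec{x}$ nor $\vec{y}$ is a scalar multiple of $\vec{1}$, so that both latitudes lie in $(0,\pi)$ and formula \eqref{eq:meridian_hypercosine} applies. Using the definitions of the angular distance and the latitude \eqref{eq:latitude}, I substitute $\cos d_a(\vec{x},\vec{y})=\langle\vec{x},\vec{y}\rangle/(\|\vec{x}\|\,\|\vec{y}\|)$ and $\cos\ell(\vec{x})=-\langle\vec{x},\vec{1}\rangle/(\sqrt{N}\,\|\vec{x}\|)$, and similarly for $\vec{y}$. For the sine terms I use $\sin\ell(\vec{x})=\sqrt{1-\cos^2\ell(\vec{x})}=\sqrt{N\|\vec{x}\|^2-\langle\vec{x},\vec{1}\rangle^2}/(\sqrt{N}\,\|\vec{x}\|)$, where the positive root is correct because $\ell(\vec{x})\in(0,\pi)$ forces $\sin\ell(\vec{x})>0$, and the radicand is strictly positive by the Cauchy--Schwarz inequality $\langle\vec{x},\vec{1}\rangle^2\le N\|\vec{x}\|^2$ (with equality exactly on the poles we have excluded). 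Substituting these into \eqref{eq:meridian_hypercosine}, the common factor $N\|\vec{x}\|\,\|\vec{y}\|$ cancels between numerator and denominator and leaves precisely the displayed expression.

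Next I would simplify the argument of $\dCC$ from \eqref{eq:CD_full}. Multiplying numerator and denominator by $N$ turns $\langle\vec{x},\vec{y}\rangle-\langle\vec{x},\vec{1}\rangle\langle\vec{y},\vec{1}\rangle/N$ into $N\langle\vec{x},\vec{y}\rangle-\langle\vec{x},\vec{1}\rangle\langle\vec{y},\vec{1}\rangle$, while $\sqrt{\|\vec{x}\|^2-\langle\vec{x},\vec{1}\rangle^2/N}\,\sqrt{\|\vec{y}\|^2-\langle\vec{y},\vec{1}\rangle^2/N}$ becomes the denominator square root of the common expression divided by $N$; the two factors of $N$ cancel and I again recover the displayed expression. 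Since the two arguments agree and $\arccos$ is injective, this establishes the identity whenever both vectors lie strictly between the poles.

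The main obstacle is not the algebra but the degenerate boundary cases, where $\vec{x}$ or $\vec{y}$ is a scalar multiple of $\vec{1}$. There Cauchy--Schwarz holds with equality, so the radicands $N\|\cdot\|^2-\langle\cdot,\vec{1}\rangle^2$ vanish and both \eqref{eq:meridian_hypercosine} and \eqref{eq:CD_full} become $0/0$; in particular \eqref{eq:meridian_hypercosine} is inapplicable because $\sin\ell=0$. I would dispatch these cases by hand, checking that the values assigned by the stated meridian-angle convention ($0$ when both vectors sit on the same pole, $\pi$ when on opposite poles, and $\pi/2$ when exactly one is on a pole) coincide with the corresponding conventional values of $\dCC$. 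This is a finite, bookkeeping-style verification rather than a genuine difficulty, so the substance of the proof lies in the single algebraic reduction carried out above.
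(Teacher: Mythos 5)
Your proposal is correct and follows essentially the same route as the paper's proof: both reduce the identity to showing the arguments of the two arccosines agree, by substituting $\cos d_a(\vec{x},\vec{y})=\langle\vec{x},\vec{y}\rangle/(\|\vec{x}\|\|\vec{y}\|)$ and $\cos\ell(\vec{x})=-\langle\vec{x},\vec{1}\rangle/(\sqrt{N}\|\vec{x}\|)$ into \eqref{eq:meridian_hypercosine} and clearing denominators (the paper multiplies by $\|\vec{x}\|\|\vec{y}\|$, you by $N\|\vec{x}\|\|\vec{y}\|$, which is the same computation up to scaling), and both finish by checking the pole conventions separately. Your treatment is, if anything, slightly more careful than the paper's on two points: the explicit Cauchy--Schwarz justification for taking the positive square root of $\sin\ell$, and the observation that at the poles \eqref{eq:CD_full} is genuinely of the form $0/0$ rather than evaluating to $\pi/2$ as the paper loosely asserts.
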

\begin{proof}
 We prove the theorem by showing that the cosine of the meridian angle in \eqref{eq:meridian_hypercosine} and the cosine of the correlation distance \eqref{eq:CD_full} are the same. From \eqref{eq:meridian_hypercosine} we directly get
 \[
 \cos(\textrm{MeridianAngle}(\vec{x},\vec{y}))=\frac{\cos d_a(\vec{x},\vec{y})-\cos\ell(\vec{x})\cdot\cos\ell(\vec{y})}{\sqrt{1-\cos^2\ell(\vec{x})}\cdot \sqrt{1-\cos^2\ell(\vec{y})}}.
 \]
 Next, we multiply both numerator and denominator by $\|\vec{x}\|\cdot\|\vec{y}\|$ to obtain
 \begin{align*}
 \frac{\|\vec{x}\|\cdot\|\vec{y}\|\cos d_a(\vec{x},\vec{y})-\|\vec{x}\|\cos\ell(\vec{x})\cdot \|\vec{y}\|\cos\ell(\vec{y})}{\sqrt{\|\vec{x}\|^2-(\|\vec{x}\|\cos\ell(\vec{x}))^2}\cdot \sqrt{\|\vec{y}\|^2-(\|\vec{y}\|\cos\ell(\vec{y}))^2}}.
 \end{align*}
 After substituting $\|\vec{x}\|\cdot\|\vec{y}\|\cos d_a(\vec{x},\vec{y})=\langle\vec{x},\vec{y}\rangle$, $\|\vec{x}\|\cos\ell(\vec{x})=-\langle\vec{x},\vec{1}\rangle/\sqrt{N}$ (see \eqref{eq:latitude}) and similarly $\|\vec{y}\|\cos\ell(\vec{y})=-\langle\vec{y},\vec{1}\rangle/\sqrt{N}$, the resulting expression is the same as the argument of the arccosine in the right-hand side of  \eqref{eq:CD_full}.
 
 We next check the boundary cases, where $\vec{x}$ or $\vec{y}$ are poles:
 The case where both $\vec{x}$ and $\vec{y}$ are on the same pole corresponds to comparing two constant vectors of the same sign, which gives correlation $1$ and thus results in a correlation distance of $0$, which is equal to the meridian angle by our definition. 
 Similarly, when both vectors are on different poles, they are constant vectors of different signs, hence, the correlation coefficient is $-1$, and the correlation distance again equals the meridian angle $\pi$.
 Finally, when one vector is on the pole and the other is not,  \eqref{eq:CD_full} gives $\dCC(\vec{x},\vec{y})=\pi/2$, again in accordance with our definition of the meridian angle.
\end{proof}

Theorem~\ref{thm:meridian_correlation} provides a new geometric interpretation for the correlation distance.
Furthermore, since the correlation distance defines a distance on the set of clustering vectors, it can be zero only when clusterings are equal. This implies that all clustering vectors, except the poles, lie on different meridians.
To avoid using two different notations for the same quantity, from now on we denote the meridian angle by $\dCC$ and refer to it as the correlation distance in the remainder of this paper. As mentioned in Section~\ref{sec:introduction}, we use the correlation distance as validation measure in this paper. That is, if $T$ is the ground truth clustering of our network while $C$ is the clustering obtained by some community detection method, then we quantify the performance of that method on that network by $\dCC(\vec{b}(T),\vec{b}(C))$.

\paragraph{Other pair-counting distance metrics.}
The correlation distance is not the only pair-counting validation index that is related to this hyperspherical geometry:
the cosine of the angular distance $d_a$ is equal to the Hubert index~\citep{Hubert1977}. This Hubert index is related to the Rand index by $\textrm{Hubert}(C_1,C_2)=2\textrm{Rand}(C_1,C_2)-1$, while the Rand index is related to the Euclidean distance $d_E$ by 
\[
  \textrm{Rand}(C_1,C_2)=1-\frac{d_E(\vec{b}(C_1),\vec{b}(C_2))^2}{4N}.
\]
These two relations are relevant because modularity can be related to both Euclidean and angular distance, as we will prove in Section~\ref{sec:modularity_distance}.

\section{Modularity as a distance}\label{sec:modularity_distance}
In the previous section, we have shown how clusterings are mapped to binary vectors in $\mathbb{R}^N$ in a meaningful way by the mapping $\vec{b}(\cdot)$. In the present section, we discuss how graphs can be mapped to $\mathbb{R}^N$ in a similar way. This is the next step in our methodology that formalizes community detection methods as algorithms that find a candidate community structure $C$ by minimizing the distance between its binary vector $\vec{b}(C)$ and the \emph{query vector} $\vec{q}(G)$ that the graph is mapped to.  Here, the function $\vec{q}(\cdot)$ maps graphs of size $n$ to vectors in $\mathbb{R}^N$, and we refer to $\vec{q}(\cdot)$ as a \emph{query mapping}. Among others, we will prove that this setup generalizes modularity-based community detection methods. 

\paragraph{Query vectors versus affinity matrices.}
When performing community detection, we want to find $\vec{q}(G)$ such that the nearest clustering vector is close to the `true' clustering $T$. Therefore, ideally, $\vec{q}(G)$ should be chosen in such a way that $\vec{b}(T)$ is one of its nearest clustering vectors. To achieve this, we wish $\vec{q}(G)_{ij}$ to be positive whenever $ij$ is an intra-community pair of $T$ and negative otherwise.
In that sense, query vectors are related to the notion of an \emph{affinity matrix} that is used throughout the clustering literature~\citep{xu2005survey,filippone2008survey}, where the pairwise similarities are summarized by an $n\times n$ symmetric matrix.
Existing clustering algorithms use this matrix in various ways to cluster the elements. For example, the popular Spectral Clustering algorithm computes the leading eigenvectors of the affinity matrix and then clusters the items based on these vectors~\citep{filippone2008survey}.
In our context, the ${n\choose 2}=N$ entries of a pair-wise vector correspond to the entries of the upper (or lower) triangle of this affinity matrix.
While for many operations, such as computing eigenvectors, a matrix representation of similarity is natural, we find that in our geometric setting, a vector representation is more convenient because we heavily rely on inner products.

The simplest way to map a graph $G$ to a vector is by a binary \emph{edge-connectivity vector} $\vec{e}(G)$, where $\vec{e}(G)_{ij}$ equals $+1$ if vertices $i$ and $j$ are connected by an edge and $-1$ otherwise. We note that this mapping $\vec{e}(\cdot)$ forms a bijection between the set of simple undirected graphs of $n$ vertices and the set of binary vectors.
However, there are many more ways in which a graph can be mapped to $\mathbb{R}^N$.
In Theorem~\ref{thm:modularity_distance} below we show that maximizing modularity $M^{\mathcal{N}}(C,G;\gamma)$, for any null-model $\mathcal{N}$ and resolution parameter $\gamma$,  is equivalent to minimizing the distance between the clustering vector $\vec{b}(C)$ and the \emph{modularity vector} $\vec{q}(G)=\vec{q}_M^{\mathcal{N}}(G;\gamma)$.
Let the $N$-dimensional vector $\vec{p}^{\mathcal{N}}(G)$ denote the expected number of edges that each vertex-pair of $G$ has when the edges are rewired according to a null-model $\mathcal{N}$. This vector is non-negative and its entries sum to the total number of edges $m_G$ as the rewiring keeps the total number of edges unchanged.
We then define the modularity vector with respect to the null model $\mathcal{N}$ and resolution parameter $\gamma$ as
\begin{equation}\label{eq:modularity_vector}
    \vec{q}_M^{\mathcal{N}}(G;\gamma)=\vec{1}+\vec{e}(G)-2\gamma \vec{p}^{\mathcal{N}}(G).
\end{equation}
When $\gamma=1$, we may sometimes write $\vec{q}_M^{\mathcal{N}}(G)$ instead of $\vec{q}_M^{\mathcal{N}}(G;1)$ for brevity.
For an Erd\H{o}s-R\'{e}nyi null model, the expected number of edges is given by $\vec{p}^{\textrm{ER}}(G)=\frac{m_G}{N}\vec{1}$, while for the Configuration Model, it is given by
\begin{equation}\label{eq:modularity_vec_ER_CM} 
\vec{p}^{\textrm{CM}}(G)_{ij}=\frac{d_i^{(G)}d_j^{(G)}}{2\widetilde{m_G}},
\end{equation}
where $\widetilde{m_G}$ is given by
\[
\widetilde{m_G}=m_G-\frac{\sum_{i\in[n]}\left(d_i^{(G)}\right)^2}{4m_G}.
\]
In other works, the denominator $2\widetilde{m_G}$ is usually replaced by $2m_G$ for simplicity. In our setting, we need to use $2\widetilde{m_G}$ to ensure that the expectations sum to $m_G$. Note that in the large-graph limit, both options are equivalent, since 
$\lim\limits_{n\rightarrow\infty}\tfrac{\widetilde{m_G}}{m_G}=1.$
We now present the main result of this paper:
\begin{theorem}\label{thm:modularity_distance}
    For a null model $\mathcal{N}$ and resolution parameter $\gamma$, maximizing modularity $M^{\mathcal{N}}(C,G;\gamma)$ over all clusterings $C$ is equivalent to minimizing either
\begin{enumerate}[label=(\roman*)]
    \item the Euclidean distance $d_E(\vec{b}(C),\vec{q}_M^{\mathcal{N}}(G;\gamma))$ over all clusterings $C$; or
    \item the angular distance $d_a(\vec{b}(C),\vec{q}_M^{\mathcal{N}}(G;\gamma))$ over all clusterings $C$.
\end{enumerate}
\end{theorem}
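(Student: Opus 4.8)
The plan is to reduce both claims to a single algebraic identity: that the inner product $\langle \vec{b}(C), \vec{q}_M^{\mathcal{N}}(G;\gamma)\rangle$ is an increasing affine function of modularity $M^{\mathcal{N}}(C,G;\gamma)$ whose additive and multiplicative constants do not depend on $C$. Once this is established, both distance claims follow immediately, since all clustering vectors $\vec{b}(C)$ lie on the hypersphere of radius $\sqrt{N}$ and hence share the same norm.

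First I would rewrite the modularity vector entrywise. Because $\vec{e}(G)_{ij}=+1$ precisely on edges and $-1$ otherwise, the combination $\vec{1}+\vec{e}(G)$ equals $2$ on edges and $0$ elsewhere, so that $\vec{q}_M^{\mathcal{N}}(G;\gamma)_{ij}=2\big(A_{ij}-\gamma\,\vec{p}^{\mathcal{N}}(G)_{ij}\big)$, where $A_{ij}\in\{0,1\}$ indicates an edge. Splitting the inner product into the intra-cluster pairs (where $\vec{b}(C)_{ij}=+1$) and the inter-cluster pairs (where $\vec{b}(C)_{ij}=-1$), and using the two constant-sum identities $\sum_{ij}A_{ij}=m_G$ and $\sum_{ij}\vec{p}^{\mathcal{N}}(G)_{ij}=m_G$ (the latter being the edge-preserving property of the null model stated above), the inter-cluster contribution can be expressed through the intra-cluster one. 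This yields $\langle \vec{b}(C), \vec{q}_M^{\mathcal{N}}(G;\gamma)\rangle = 4\,S_C - 2(1-\gamma)m_G$, where $S_C := \sum_{ij\ \text{intra}}\big(A_{ij}-\gamma\,\vec{p}^{\mathcal{N}}(G)_{ij}\big)$ is, up to the normalization convention, exactly $m_G\cdot M^{\mathcal{N}}(C,G;\gamma)$. Since $4m_G>0$ and the offset $2(1-\gamma)m_G$ is independent of $C$, maximizing modularity is equivalent to maximizing this inner product.

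With this identity in hand, part (ii) follows by expanding the angular distance as $d_a(\vec{b}(C),\vec{q}_M^{\mathcal{N}}(G;\gamma))=\arccos\big(\langle\vec{b}(C),\vec{q}_M^{\mathcal{N}}(G;\gamma)\rangle/(\sqrt{N}\,\|\vec{q}_M^{\mathcal{N}}(G;\gamma)\|)\big)$; both $\sqrt{N}$ and $\|\vec{q}_M^{\mathcal{N}}(G;\gamma)\|$ are constant over clusterings, and $\arccos$ is strictly decreasing, so minimizing $d_a$ amounts to maximizing the inner product. Part (i) follows from the expansion $d_E(\vec{b}(C),\vec{q}_M^{\mathcal{N}}(G;\gamma))^2 = N - 2\langle\vec{b}(C),\vec{q}_M^{\mathcal{N}}(G;\gamma)\rangle + \|\vec{q}_M^{\mathcal{N}}(G;\gamma)\|^2$, where again only the middle term depends on $C$, so minimizing the squared (hence the) Euclidean distance is equivalent to maximizing the inner product.

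I expect the only real obstacle to be the bookkeeping in the first step: correctly converting between the $\pm 1$ and $\{0,1\}$ encodings, invoking the null-model identity $\sum_{ij}\vec{p}^{\mathcal{N}}(G)_{ij}=m_G$ to collapse the inter-cluster sum, and verifying that the resulting offset $2(1-\gamma)m_G$ is genuinely free of $C$. Everything downstream is a routine monotone-transformation argument that exploits the fact, central to this paper, that clustering vectors all have the same norm $\sqrt{N}$.
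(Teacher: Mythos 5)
Your proposal is correct and takes essentially the same route as the paper: both proofs establish that $\langle\vec{b}(C),\vec{q}_M^{\mathcal{N}}(G;\gamma)\rangle = 4m_G\,M^{\mathcal{N}}(C,G;\gamma) - 2(1-\gamma)m_G$, an increasing affine function of modularity with $C$-independent constants (you derive it by entrywise intra/inter bookkeeping, the paper by vector algebra on its inner-product form of modularity), and then exploit the constant norm $\sqrt{N}$ of clustering vectors together with monotonicity of $\arccos$ and of squaring. Your argument for the Euclidean case reproduces, inline, exactly the content of the paper's Lemma~\ref{lem:euc2cad}.
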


Theorem~\ref{thm:modularity_distance} thus gives a geometric interpretation of modularity maximization. We will see that this geometric interpretation is useful. For example, it paves the way for a generalization of modularity-based community detection methods and provides a new interpretation of the resolution parameter $\gamma$, as we explain in more detail in Section~\ref{sec:consequences}. 

To prove Theorem~\ref{thm:modularity_distance}, we rely on the following lemma:

\begin{lemma}\label{lem:euc2cad}
Minimizing the Euclidean distance to some $\vec{v}$ over the set of $\pm1$ binary vectors $\vec{b}$ is equivalent to minimizing the angular distance between $\vec{b}$ and $\vec{v}$. 
\end{lemma}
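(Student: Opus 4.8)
The plan is to reduce both minimization problems to the \emph{same} problem, namely maximizing the inner product $\langle\vec{b},\vec{v}\rangle$ over all $\pm1$ binary vectors $\vec{b}$. The entire argument hinges on the single structural fact already noted in the paper: every $\pm1$ binary vector has the same Euclidean norm, $\|\vec{b}\|=\sqrt{N}$, since each of its $N$ entries squares to $1$.

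First I would expand the squared Euclidean distance and use the fixed-norm fact:
\[
d_E(\vec{b},\vec{v})^2=\|\vec{b}\|^2-2\langle\vec{b},\vec{v}\rangle+\|\vec{v}\|^2=N+\|\vec{v}\|^2-2\langle\vec{b},\vec{v}\rangle.
\]
Here both $N$ and $\|\vec{v}\|^2$ are constants that do not depend on the choice of $\vec{b}$. Since $t\mapsto t^2$ is strictly increasing on the nonnegative reals, minimizing $d_E(\vec{b},\vec{v})$ is equivalent to minimizing $d_E(\vec{b},\vec{v})^2$, which in turn (after discarding the additive constants and noting the negative sign on the inner product term) is equivalent to \emph{maximizing} $\langle\vec{b},\vec{v}\rangle$ over $\vec{b}$.

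Next I would treat the angular distance. By definition~\eqref{eq:angular_distance}, together with $\|\vec{b}\|=\sqrt{N}$,
\[
d_a(\vec{b},\vec{v})=\arccos\left(\frac{\langle\vec{b},\vec{v}\rangle}{\sqrt{N}\,\|\vec{v}\|}\right).
\]
Because $\arccos$ is strictly decreasing on $[-1,1]$ and the factor $\sqrt{N}\,\|\vec{v}\|$ is a \emph{positive} constant independent of $\vec{b}$ (assuming $\vec{v}\neq\vec{0}$, so that the angular distance is well defined), minimizing $d_a(\vec{b},\vec{v})$ is again equivalent to maximizing $\langle\vec{b},\vec{v}\rangle$. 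Since both problems reduce to the identical maximization over the same feasible set, the two minimizers coincide, which is exactly the claimed equivalence.

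I do not anticipate a genuine obstacle here; the lemma is essentially a consequence of the binary vectors all lying on the sphere of radius $\sqrt{N}$. The only points requiring care are bookkeeping ones: checking that the quantities I drop ($N$, $\|\vec{v}\|^2$, and $\sqrt{N}\,\|\vec{v}\|$) are genuinely independent of $\vec{b}$ (which is precisely what the fixed-norm property guarantees) and that the surviving scalar factors are positive, so that the directions of monotonicity of squaring and of $\arccos$ are preserved when passing from distances to the inner product.
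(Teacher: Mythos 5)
Your proof is correct and follows essentially the same route as the paper's: both rest on the fixed norm $\|\vec{b}\|=\sqrt{N}$, the expansion of $d_E(\vec{b},\vec{v})^2$, and the monotonicity of $\arccos$, with your pivot through maximizing $\langle\vec{b},\vec{v}\rangle$ being only a cosmetic reorganization of the paper's direct substitution of $\langle\vec{b},\vec{v}\rangle=\|\vec{v}\|\sqrt{N}\cos d_a(\vec{b},\vec{v})$. Your added remark that $\vec{v}\neq\vec{0}$ is needed for the angular distance to be well defined is a small but legitimate point of care that the paper leaves implicit.
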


\begin{proof}
    The result follows after we relate the square of the Euclidean distance to the cosine of the angular distance, and then use the fact that any binary vector $\vec{b}$ has length $\sqrt{N}$: 
    \begin{align*}
        d_E(\vec{b},\vec{v})^2
        &=\|\vec{b}\|^2+\|\vec{v}\|^2-2\langle\vec{b},\vec{v}\rangle\\
        &=N+\|\vec{v}\|^2-2\|\vec{v}\|\sqrt{N}\cos d_a(\vec{b},\vec{v}).
    \end{align*}
    Now $N$ and $\|\vec{v}\|$ are constant w.r.t. $\vec{b}$ and can thus be ignored. This tells us that minimizing the Euclidean distance is equivalent to maximizing the cosine of the angular distance, or equivalently, minimizing the angular distance. This completes the proof.
\end{proof}

We are now ready to prove Theorem~\ref{thm:modularity_distance}:
\begin{proof}[of Theorem~\ref{thm:modularity_distance}]
    Modularity can be written as
    \begin{equation}\label{eq:modularity}
    M^{\mathcal{N}}(C,G;\gamma)=\frac{1}{m_G}\langle\tfrac{1}{2}(\vec{1}+\vec{b}(C)),\tfrac{1}{2}(\vec{1}+\vec{e}(G))-\gamma\vec{p}^{\mathcal{N}}(G)\rangle.
    \end{equation}
    We multiply this by $4m_G$ and subtract $\langle\vec{1},\vec{1}+\vec{e}(G)-2\gamma\vec{p}^{\mathcal{N}}(G)\rangle$, both of which are constant w.r.t.\ $C$. This yields
    \[
    \langle\vec{b}(C),\vec{1}+\vec{e}(G)-2\gamma\vec{p}^{\mathcal{N}}(G))\rangle=\|\vec{b}(C)\|\cdot\|\vec{q}_M^{\mathcal{N}}(G;\gamma)\|\cos d_a(\vec{b}(C),\vec{q}_M^{\mathcal{N}}(G;\gamma)).
    \]
    Finally, $\|\vec{b}(C)\|=\sqrt{N}$ and $\|\vec{q}_M^{\mathcal{N}}(G;\gamma)\|$ are constant w.r.t.\ $C$. From this, we conclude that maximizing modularity is equivalent to maximizing the cosine of the angular distance. Since the arccosine is a strictly decreasing function on $[-1,1]$, it follows that maximizing modularity is equivalent to minimizing the angular distance.
    The equivalence to the Euclidean distance follows immediately from Lemma~\ref{lem:euc2cad}.
\end{proof}

Theorem~\ref{thm:modularity_distance} relates modularity to a hyperspherical geometry as well as a Euclidean geometry, where the latter has one more dimension than the former. This can be explained by the fact that rescaling the query vector does not affect the optimization, as shown in Lemma~\ref{lem:euc2cad}. Therefore, the length of a query vector in Euclidean space has no effect on the resulting candidate clustering. Because of this, we focus on the hyperspherical geometry for the remainder of this work.

We note that Theorem~\ref{thm:modularity_distance} can easily be extended to variants of modularity. For example, modularity can be extended to weighted networks by replacing the positive entries of $\vec{e}(G)$ by the edge weights and replacing $d_i(G)$ by the sum of edge weights connected to vertex $i$. Similarly, the equivalence can also be extended to methods with negative edge-weights such as in \cite{traag2009community}. This shows that all modularity-based methods can be formulated in our geometrical framework. However, the class of methods that fit our geometrical framework is not limited to modularity-based methods, as we discuss in Section~\ref{sec:query_mappings}.

\section{Consequences for modularity-based methods}\label{sec:consequences}
In this section, we discuss several consequences of the equivalence proved in Theorem~\ref{thm:modularity_distance}. In particular, we provide a geometric interpretation for the Louvain algorithm and the resolution limit.

\paragraph{Louvain as an approximate nearest-neighbor search.}
Let us denote a clustering vector that minimizes the angular distance to the query vector $\vec{q}$ by $\mathcal{C}(\vec{q})$. Possibly, there are multiple clustering vectors at minimal distance to the query vector. In such cases, we choose $\mathcal{C}(\vec{q})$ arbitrarily (but deterministically) out of them.

Note that $\mathcal{C}(\vec{q})$ is a \emph{nearest neighbor} of the query vector $\vec{q}$ among the clustering vectors.
However, from Theorem~\ref{thm:modularity_distance} it follows that finding $\mathcal{C}(\vec{q}_M^{\mathcal{N}}(G;\gamma))$ is equivalent to finding the global modularity maximum, which is known to be NP-hard~\citep{brandes2007modularity}.
Therefore, most modularity-maximizing algorithms, such as the popular Louvain algorithm~\citep{blondel2008fast}, \emph{approximately} maximize modularity, resulting in an \emph{approximate nearest neighbor} of the modularity vector. Hence, the Louvain algorithm can be viewed as an \emph{approximate nearest-neighbor search}.

While the Louvain algorithm was initially introduced for  CM-modularity maximization, it is known to be able to efficiently optimize other clustering-functions as well~\citep{traag2011narrow,prokhorenkova2019community}. In this work, we use the Louvain algorithm to minimize the angular distance to a query vector.
Let $\mathcal{L}(\vec{q})$ denote the approximate nearest neighbor obtained by applying the Louvain algorithm to a query vector $\vec{q}$.
Then, we expect that $d_a(\mathcal{L}(\vec{q}),\vec{q})\approx d_a(\mathcal{C}(\vec{q}),\vec{q}),$
though $d_a(\mathcal{L}(\vec{q}),\vec{q})\geq d_a(\mathcal{C}(\vec{q}),\vec{q})$ always holds since $\mathcal{C}(\vec{q})$ is a global minimizer.

Furthermore, because the Louvain algorithm is a greedy algorithm that is initialized at the clustering corresponding to the fine pole $-\vec{1}$, it follows that $\mathcal{L}(\vec{q})$ cannot be further from $\vec{q}$ than $-\vec{1}$, i.e. $d_a(\mathcal{L}(\vec{q}),\vec{q})\leq d_a(-\vec{1},\vec{q})=\ell(\vec{q})$. Combining this, we get
\begin{equation}\label{eq:louvain_dist}
    d_a(\mathcal{L}(\vec{q}),\vec{q})\in[d_a(\mathcal{C}(\vec{q}),\vec{q}),\ell(\vec{q})].
\end{equation}
%This tells us that if the true clustering is $T$, we would want the query vector to at least satisfy $\ell(\vec{q})\geq d_a(\vec{b}(T),\vec{q})$.

\paragraph{Projections methods.}
A mapping $\mathcal{A}$ is a \emph{projection} if $\mathcal{A}(\mathcal{A}(\vec{x}))=\mathcal{A}(\vec{x})$ holds for all $\vec{x}\in\mathbb{R}^N$.
It can easily be seen that the hypersphere projection $\mathcal{H}$ and the parallel projection $\mathcal{P}_\lambda$, for any $\lambda\in[0,\pi]$, are indeed projections.
Also, it holds that $\mathcal{C}$ is a projection onto the set of clustering vectors, as $\mathcal{C}(\vec{b}(C))=\vec{b}(C)$ holds for all clustering vectors. In Appendix~\ref{apx:proofs}, we furthermore prove that the Louvain algorithm is also a projection.
Therefore, modularity maximization using the Louvain algorithm belongs to the broader class of \emph{projection methods} that we define now:

\begin{definition}
A community detection method is a \emph{projection method} if it can be described by the following two-step approach: 1) the graph is first mapped to a query vector, and 2) the query vector is projected to the set of clustering vectors.
\end{definition}

In Section~\ref{sec:query_mappings}, we discuss various interesting options for the first step, while we use the Louvain algorithm for the second step.

\paragraph{Resolution and latitude.}
The resolution parameter in the modularity function is closely related to the latitude of the corresponding modularity vector, as described by the following lemma:
\begin{lemma}\label{lem:modularity_latitude}
    The line $(\mathcal{H}(\vec{q}_M^{\mathcal{N}}(G;\gamma)))_{\gamma\geq0}$ 
    is the (hyperspherically) straight line that starts from $\mathcal{H}(\vec{1}+\vec{e}(G))$, intersects the equator at $\gamma=1$ and ends in $\mathcal{H}(-\vec{p}^{\mathcal{N}}(G))$. The resolution parameter relates to the latitude as
    \begin{equation}\label{eq:latitude_modularity_vector}
    \ell(\vec{q}_M^{\mathcal{N}}(G;\gamma))=\arccos\left(\frac{(\gamma-1)m_G}{\sqrt{N}\cdot\sqrt{(1-\gamma)m_G+\gamma^2\left\|\vec{p}^{\mathcal{N}}(G)\right\|^2-\gamma\langle\vec{p}^{\mathcal{N}}(G),\vec{e}(G)\rangle}}\right).
\end{equation}
\end{lemma}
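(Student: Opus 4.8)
The plan is to treat the two assertions separately: that the trajectory $(\mathcal{H}(\vec{q}_M^{\mathcal{N}}(G;\gamma)))_{\gamma\geq0}$ is a hyperspherical straight line with the stated endpoints and equator crossing, and that its latitude is given by~\eqref{eq:latitude_modularity_vector}. The first is geometric, while the second reduces to a computation of one inner product and one norm.

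For the straight-line claim, the key point is that $\gamma\mapsto\vec{q}_M^{\mathcal{N}}(G;\gamma)=(\vec{1}+\vec{e}(G))-\gamma\cdot 2\vec{p}^{\mathcal{N}}(G)$ is an \emph{affine} function of $\gamma$, so all the points $\vec{q}_M^{\mathcal{N}}(G;\gamma)$ lie on a single affine line in $\mathbb{R}^N$. Writing $\vec{a}=\vec{1}+\vec{e}(G)$ and $\vec{d}=-2\vec{p}^{\mathcal{N}}(G)$, every such point lies in the two-dimensional plane $P=\mathrm{span}(\vec{a},\vec{d})$, which passes through the origin. Since $\mathcal{H}$ only rescales by a positive factor, it maps $P$ into itself, so the image of the line lands in $P\cap\{\|\cdot\|=\sqrt{N}\}$, which is precisely a great circle; hence the trajectory is an arc of a great circle, i.e. a hyperspherical straight line. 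The endpoints then follow from the scale-invariance of $\mathcal{H}$: at $\gamma=0$ we get $\mathcal{H}(\vec{a})=\mathcal{H}(\vec{1}+\vec{e}(G))$, while dividing by $\gamma$ and letting $\gamma\to\infty$ gives $\mathcal{H}(\vec{a}/\gamma+\vec{d})\to\mathcal{H}(\vec{d})=\mathcal{H}(-\vec{p}^{\mathcal{N}}(G))$, the factor $2$ washing out.

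The remaining claims reduce to three elementary inner products: $\langle\vec{1},\vec{1}\rangle=N$, $\langle\vec{e}(G),\vec{1}\rangle=2m_G-N$ (since $\vec{e}(G)$ has exactly $m_G$ entries equal to $+1$), and $\langle\vec{p}^{\mathcal{N}}(G),\vec{1}\rangle=m_G$ (the expected edge counts sum to $m_G$). These give $\langle\vec{q}_M^{\mathcal{N}}(G;\gamma),\vec{1}\rangle=2(1-\gamma)m_G$, which vanishes exactly at $\gamma=1$, so the trajectory crosses the equator there. For the norm I expand $\|\vec{q}_M^{\mathcal{N}}(G;\gamma)\|^2=\|\vec{1}+\vec{e}(G)\|^2-4\gamma\langle\vec{1}+\vec{e}(G),\vec{p}^{\mathcal{N}}(G)\rangle+4\gamma^2\|\vec{p}^{\mathcal{N}}(G)\|^2$ and use $\|\vec{1}+\vec{e}(G)\|^2=4m_G$ (the vector $\vec{1}+\vec{e}(G)$ has entries $2$ on edges and $0$ elsewhere) together with $\langle\vec{1},\vec{p}^{\mathcal{N}}(G)\rangle=m_G$, obtaining $\|\vec{q}_M^{\mathcal{N}}(G;\gamma)\|^2=4\big((1-\gamma)m_G-\gamma\langle\vec{p}^{\mathcal{N}}(G),\vec{e}(G)\rangle+\gamma^2\|\vec{p}^{\mathcal{N}}(G)\|^2\big)$.

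Finally I substitute into the latitude definition $\ell(\vec{x})=\arccos\big(-\langle\vec{x},\vec{1}\rangle/(\sqrt{N}\|\vec{x}\|)\big)$. The factor $2$ in $-\langle\vec{q}_M^{\mathcal{N}}(G;\gamma),\vec{1}\rangle=2(\gamma-1)m_G$ cancels against the factor $2$ emerging from the square root of $\|\vec{q}_M^{\mathcal{N}}(G;\gamma)\|^2$, producing exactly~\eqref{eq:latitude_modularity_vector}. I expect no genuine obstacle: once the three inner products are in hand the rest is bookkeeping, and the only conceptual step worth stating carefully is that the central (radial) projection of an affine line lands on a great circle, which is precisely what turns the affine $\gamma$-dependence of the modularity vector into a hyperspherical straight line.
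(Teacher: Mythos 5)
Your proposal is correct and follows essentially the same route as the paper's proof: the trajectory lies in the two-dimensional plane through the origin spanned by $\vec{1}+\vec{e}(G)$ and $\vec{p}^{\mathcal{N}}(G)$, so its hypersphere projection is an arc of a great circle, the endpoints come from $\gamma=0$ and the $\gamma\to\infty$ limit, and the latitude formula follows by substituting $\langle\vec{q}_M^{\mathcal{N}}(G;\gamma),\vec{1}\rangle$ and $\|\vec{q}_M^{\mathcal{N}}(G;\gamma)\|$ into the definition of $\ell$. The only difference is that you carry out the inner-product and norm bookkeeping (and the explicit equator crossing at $\gamma=1$) in detail where the paper leaves it as a substitution, which is fine.
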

\begin{proof}
    For $\gamma=0$, the modularity vector is given by $\vec{q}_M^{\mathcal{N}}(G;0)=\vec{1}+\vec{e}(G)$, corresponding to the starting point of the line. The endpoint is obtained by
    \begin{align*}
    \lim_{\gamma\rightarrow\infty}\mathcal{H}(\vec{q}_M^{\mathcal{N}}(G;\gamma))=\lim_{\gamma\rightarrow\infty}\sqrt{N}\frac{\vec{1}+\vec{e}(G)-2\gamma\vec{p}^{\mathcal{N}}(G)}{\|\vec{1}+\vec{e}(G)-2\gamma\vec{p}^{\mathcal{N}}(G\|}
    =-\sqrt{N}\frac{\vec{p}^{\mathcal{N}}(G)}{\left\|\vec{p}^{\mathcal{N}}(G)\right\|}=\mathcal{H}(-\vec{p}^{\mathcal{N}}(G)).
    \end{align*}
To prove that the curve $\gamma\mapsto \mathcal{H}(\vec{q}_M^{\mathcal{N}}(G;\gamma))$ is a hyperspherical straight line, we must show that it is a segment of a great circle.
    Note that according to \eqref{eq:modularity_vector} the line $(\vec{q}_M^{\mathcal{N}}(G;\gamma))_{\gamma\geq0}$ is contained in the 2-dimensional plane that is defined by the three vectors $\vec{q}_M^{\mathcal{N}}(G;0)$, $-\vec{p}^{\mathcal{N}}(G)$ and the origin. The intersection between this plane and the hypersphere defines a great circle. Since the projection $(\mathcal{H}(\vec{q}_M^{\mathcal{N}}(G;\gamma)))_{\gamma\geq0}$ is contained in the same plane and lies on the hypersphere, it must be a subset of the same great circle. Therefore, $(\mathcal{H}(\vec{q}_M^{\mathcal{N}}(G;\gamma)))_{\gamma\geq0}$ is a segment of a great circle, so that it is a hyperspherical straight line by definition.
    Finally, \eqref{eq:latitude_modularity_vector} is obtained by substituting the modularity vector, given by \eqref{eq:modularity_vector}, into the latitude $\ell$, as given by \eqref{eq:latitude} and using $\langle\vec{p}^{\mathcal{N}}(G),\vec{1}\rangle=m_G$. 
\end{proof}
In Lemma~\ref{lem:modularity_latitude}, we restrict to $\gamma\geq0$. While modularity could also be defined for {\em negative} resolution parameter values, the optimization of modularity for $\gamma<0$ always leads to clustering all items in the same cluster (i.e., the coarse pole).
Lemma~\ref{lem:modularity_latitude} can easily be extended to take these negative resolution parameters to show that $(\mathcal{H}(\vec{q}_M^{\mathcal{N}}(G;\gamma)))_{\gamma\in\mathbb{R}}$ corresponds to the unique hyperspherical straight line from $\mathcal{H}(\vec{p}^{\mathcal{N}}(G))$ to $\mathcal{H}(-\vec{p}^{\mathcal{N}}(G))$ that passes through $\mathcal{H}(\vec{1}+\vec{e}(G))$.

Figure~\ref{fig:modularity_lines} provides a three-dimensional illustration of what these modularity lines may look like. 
For the ER null-model, we have $\mathcal{H}(\vec{p}^{\mathcal{N}}(G))=\vec{1}$, so Lemma~\ref{lem:modularity_latitude} tells us that the line $\mathcal{H}(\vec{q}_M^{\textrm{ER}}(G;\gamma))_{\gamma\geq0}$ lies on a single meridian, ranging from ${\mathcal{H}(\vec{1}+\vec{e}(G))}$ to $-\vec{1}$. The latitudes of vectors on this meridian are given by
\[
\tan\ell(\vec{q}_M^{\textrm{ER}}(G;\gamma))=\frac{\sqrt{\frac{N-m_G}{m_G}}}{\gamma-1},
\]
as is derived in Appendix~\ref{apx:proofs}. For other null models such as CM, this line does not correspond to a meridian, but passes through a range of meridians.

\begin{figure}
    \centering
    \resizebox{0.5\textwidth}{!}{\begin{tikzpicture} % CENT

%% some definitions

\def\R{3} % sphere radius
\def\angEl{5} % elevation angle
\def\angAz{-101} % azimuth angle
\def\xlong{48}
\def\xlat{25}
\def\bTlat{40}

%% working planes

\pgfmathsetmacro\H{\R*cos(\angEl)} % distance to north pole
\tikzset{xyplane/.style={cm={cos(\angAz),sin(\angAz)*sin(\angEl),-sin(\angAz),
                              cos(\angAz)*sin(\angEl),(0,-\H)}}}
\LongitudePlane[xzplane]{\angEl}{\angAz}
\LongitudePlane[pzplane]{\angEl}{\angAz+\xlong}
\LatitudePlane[equator]{\angEl}{0}

%% draw xyplane and sphere

%\node [anchor=east]  {Tangent plane of $\vec{y}$};
\fill[ball color=white] (0,0) circle (\R); % 3D lighting effect
\draw (0,0) circle (\R);

%% characteristic points

\coordinate[coordinate] (O) at (0,0);
\coordinate[mark coordinate] (N) at (0,\H);
\coordinate[mark coordinate] (S) at (0,-\H);
\path[xzplane] (\bTlat:\R) coordinate[mark coordinate] (bT);
\path[xzplane] (0:\R) coordinate[mark coordinate] (q_ER);
\pgfmathsetmacro\cosx{cos(\xlat)};
\pgfmathsetmacro\cosbT{cos(\bTlat)};
\path[xyplane] (\xlong:\R*\cosx) coordinate (zhat);
\path[xyplane] (0:\R*\cosbT) coordinate (xhat);
\path[equator] (\angAz+\xlong/2:\R) coordinate[mark coordinate] (q_CM);

\pgfmathsinandcos\sinzlat\coszlat{15}
\pgfmathsinandcos\sinxlon\cosxlon{\angAz}
%\coordinate[mark coordinate] (test) at (-\R*\sinxlon*\coszlat,\R*\cosxlon*\sinEl*\coszlat+\R*\sinzlat*\cosEl);

\pgfmathsinandcos\sinEl\cosEl{\angEl} % elevation
 \pgfmathsinandcos\sinAz\cosAz{\angAz} % azimuth
 \pgfmathsinandcos\sinxlat\cosxlat{\xlat}
 \pgfmathsinandcos\sinxlon\cosxlon{\xlong+\angAz}

% Show which angles are perpendicular
\pgfmathsetmacro\perplength{\R/7};
%\draw[xzplane,thin] (0,-\R) -- (0,\perplength-\R) -- (\perplength,\perplength-\R) -- (\perplength,-\R) -- cycle;
%\draw[pzplane,thin] (0,-\R) -- (0,\perplength-\R) -- (\perplength,\perplength-\R) -- (\perplength,-\R) -- cycle;
%\draw[pzplane] (zhat) -- ($ (zhat)+ (-\perplength,0) $) -- ($ (zhat)+ (-\perplength,\perplength) $) -- ($ (zhat)+ (0,\perplength) $);
%\draw[xzplane] (xhat) -- ($ (xhat)+ (-\perplength,0) $) -- ($ (xhat)+ (-\perplength,\perplength) $) -- ($ (xhat)+ (0,\perplength) $);

%% draw meridians and latitude circles

%\DrawLatitudeCircle[\R]{\bTlat}
%\DrawLatitudeCircle[\R]{\xlong-90}
\DrawLatitudeCircle[\R]{0} % equator
%\DrawLatitudeCircle[\R]{\xlat}
\DrawLongitudeCircle[\R]{\angAz} % xzplane
%\DrawLongitudeCircle[\R]{\angAz+\xlong} % pzplane
%\DrawLongitudeCircle[\R]{\angAz+90} % pzplane

%% draw xyz coordinate system

%\draw[xyplane,<->] (0:1.5*\R) -- (0,0) -- (\xlong:1.5*\R);

%% draw lines and put labels

%\draw[->] (z)-- (zhat) node[above right] {$\vec{\hat{z}}$};
%\draw[->] (x) -- (xhat) node[above left] {$\vec{\hat{x}}$};
%\draw (0,-\H) -- (0,0) node[below right] {\small $\vec{0}$};
\path (S) +(0.4ex,-0.4ex) node[below] {\small $-\vec{1}$};
%\path (x) node[above right] {\small $\vec{x}$};
\path (bT) node[above left] {\small $\vec{1}+\vec{e}(G)$};
\path (q_ER) node[below left] {\small $\vec{q}_M^{\textrm{ER}}(G)$};
\path (q_CM) node[above right] {\small $\vec{q}_M^{\textrm{CM}}(G)$};

\projection\angEl{\angAz-16}{\xlong-90}\labelx\labely;
%\path (\R*\labelx,\R*\labely) node[above left,rotate=-17] {\scriptsize $\ell=\theta$};
%\draw[xyplane,<->,thin] (\R,0) arc (0:\xlong/3:\R) node[anchor=north]{$\theta$} arc (\xlong/3:\xlong:\R);

\pgfmathsetmacro\projectionlat{atan(cos(\xlong)*tan(\bTlat+90))-90-\bTlat};
\pgfmathsetmacro\heuristiclat{acos(cos(\xlong)*cos(\bTlat+90)/(1+sin(\xlong)*sin(\bTlat+90))};
%\DrawArc{\angEl}{\angAz}{\bTlat}{\xlong}{\projectionlat}{}{$\mathcal{P}_{\lambda'}(\vec{x})$}{above};
\DrawArcNoMark{\angEl}{\angAz}{\bTlat}{\xlong}{-\bTlat-\bTlat}{}{}{}{};
\draw[xzplane,->,line width=0.35mm] (bT) arc (\bTlat:-\bTlat:\R);
%\DrawArc{\angEl}{\angAz}{\bTlat}{\xlong}{\projectionlat-(\heuristiclat-90-\bTlat-\projectionlat)}{}{$\mathcal{P}_{\lambda'-d\lambda}(\vec{x})$}{below};
%\DrawArc{\angEl}{\angAz}{\bTlat}{\xlong}{10}{}{+10}
%\DrawArc{\angEl}{\angAz}{\bTlat}{\xlong}{-10}{}{-10}
%\DrawArc{\angEl}{\angAz}{0}{\xlong}{0}{c}
\end{tikzpicture}
    }
    \caption{Illustration of the hyperspherical lines formed by varying the latitude of the modularity vector for a null model. The ER-modularity vectors lie on a single meridian, in contrast to the CM-modularity vectors.}
    \label{fig:modularity_lines}
\end{figure}
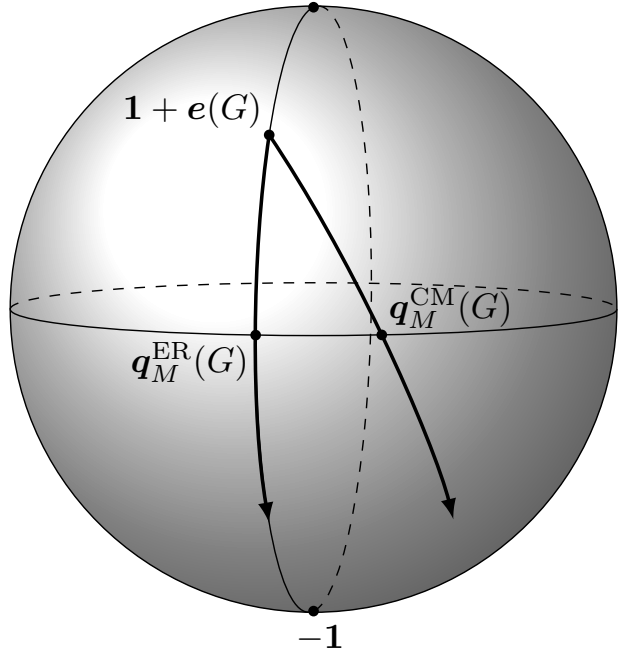

%Furthermore, Lemma~\ref{lem:modularity_latitude} shows that the latitude of the modularity vector for a null model $\mathcal{N}$ and graph $G$ is lower-bounded by $\ell(-\vec{p}^{\mathcal{N}}(G))$.
Already from the example of the ER and CM null models, we see that the latitude of the modularity vector, described by \eqref{eq:latitude_modularity_vector} in Lemma~\ref{lem:modularity_latitude}, depends on the particular null model. In other words,  two modularity vectors with the same resolution parameter but different null models have different latitudes. 
%Also, it is not at all clear how the resolution parameter value will relate to the granularity of the resulting clustering, we only know that higher resolution parameter values generally result in clusterings with finer granularity.
The advantage of latitude is that it puts all these modularity and clustering vectors on the same scale, allowing for a comparison of modularity-based methods of different null models.

\paragraph{The hypersphere interpretation of the resolution limit.}
A well-known limitation of modularity-maximizing methods is the so-called \emph{resolution limit}~\citep{Fortunato2007}. 
The easiest example to demonstrate this resolution limit is by a ring of cliques~\citep{Fortunato2007}: consider a graph $G_{k,s}$ that is given by a ring of $k$ cliques each consisting of $s$ vertices and let each pair of neighboring cliques be connected by a single edge. Such a graph has a natural clustering $T_{k,s}$ into its $k$ cliques. It can be shown that for any $s>2$ and sufficiently high $k$, a clustering that merges two neighboring cliques has a higher modularity value, which is clearly undesirable. While this result was initially only given for the CM null model and $\gamma=1$, it easily generalizes to other null models, resolution parameter values and graph designs~\citep{Fortunato2007,traag2011narrow}. 

The general problem is that in large sparse graphs, the expected number of edges between small communities is so negligibly small that the existence of \emph{any} edge between them is enough to result in a modularity-increase for merging these communities.
In our geometric framework, we will show that this resolution limit can be characterized as a discrepancy between the latitude of the query vector and the latitude of the ground truth clustering vector. The inverse triangle inequality allows us to bound the angular distance by
\[
d_a(\vec{b}(T),\vec{q}_M^{\mathcal{N}}(G;\gamma))\geq |d_a(-\vec{1},\vec{q}_M^{\mathcal{N}}(G;\gamma))-d_a(-\vec{1},\vec{b}(T))|=|\ell(\vec{q}_M^{\mathcal{N}}(G;\gamma))-\ell(\vec{b}(T))|.
\]
Note that for $\gamma=1$, the modularity vector has latitude $\pi/2$ while the latitude of the ground truth clustering vector  decreases roughly as $O(n^{-1/2})$, as shown in~\eqref{eq:clustering_concentration}.
Therefore, $d_a(\vec{b}(T),\vec{q}_M^{\mathcal{N}}(G;1))\approx\pi/2$ for large graphs. This means that on this hypersphere, the vectors $\vec{b}(T)$ and $\vec{q}_M^{\mathcal{N}}(G;\gamma)$ are, in a way, \emph{half a world apart}. Since modularity maximization returns a clustering $C$ with $\vec{b}(C)$ near $\vec{q}_M^{\mathcal{N}}(G;\gamma)$, this clustering $C$ will likely be far from the ground truth clustering $T$.
This characterizes the resolution limit as a discrepancy between the latitudes of the modularity vector and ground truth clustering vector.

However, using a different choice for the query vector could avoid such a resolution limit altogether: for the particular example of the ring of cliques $G_{k,s}$, the simple query mapping $\vec{q}(G)=\vec{e}(G)$ would have angular distance
\[
d_a(\vec{b}(T),\vec{e}(G_{k,s}))=\arccos\left(1-2\frac{k}{N}\right)\approx2\sqrt{k/{k\cdot s\choose 2}}=O(k^{-1/2}),
\]
for $k\rightarrow\infty$. Therefore, this query mapping overcomes the resolution limit and is thus better than the modularity query mapping for this particular graph. We emphasize that the choice of the query vector should depend on characteristics of the graph and community structure: for more realistic networks with larger and less dense communities, the query mapping $\vec{q}(G)=\vec{e}(G)$ performs poorly as its latitude is small compared to that of the ground truth clustering.
We note that $\vec{e}(G)$ does lie on the same meridian as the ER-modularity vector\footnote{More precisely, $\vec{e}(G)=\vec{q}_M^{\textrm{ER}}\left(G;\tfrac{N}{2m_G}\right)$.} while its latitude decreases as $O(n^{-1/2})$, which suggests that adjusting the latitude of a query vector is a suitable way of overcoming such resolution limits.

Based on the discussion above, we conclude that the latitude of the query vector is a more informative quantity than the resolution parameter.
Indeed, the latitude has the  advantage that it is defined for any query and clustering vector, while the resolution parameter only tells us something about the specific query vector $\vec{q}_M^{\mathcal{N}}(G;\gamma)$.
Therefore, we suggest to use the latitude of the query vector, rather than the resolution parameter, as the quantity that regulates the granularity of the candidate clustering.

\section{Beyond modularity-based methods}\label{sec:query_mappings}
As shown in Section~\ref{sec:modularity_distance}, modularity merely corresponds to a subclass of possible query mappings.
In this section, we discuss useful query mappings that do not fall under the classical formulation in terms of null models and resolution parameters.
We will argue that for this wider class of projection methods, it is unnatural to think in terms of null models and resolution parameters, but that these methods instead are better characterized in terms of the meridians and latitudes of the corresponding query vectors.
%We will then explain some conceptual differences between classical modularity-based methods and our generalized class of community detection methods: where selecting a modularity-based method requires answering the (perhaps nonsensical) question ``what would the network look like if it did not have community structure?", our generalized class of community detection methods replaces this by the question ``what pair-wise data do we expect to be correlated with the community structure?".

\subsection{Beyond null models}\label{sec:beyond_null}
From Lemma~\ref{lem:modularity_latitude}, we learned that for different null models, modularity corresponds to different lines on the hypersphere, each starting from $\mathcal{H}(\vec{1}+\vec{e}(G))$ and crossing the equator at different points. The line of ER modularity corresponds to the meridian of $\vec{e}(G)$, while the line of CM modularity passes through a range of meridians and intersects the ER meridian at $\mathcal{H}(\vec{1}+\vec{e}(G))$, as illustrated by Figure~\ref{fig:modularity_lines}.
Therefore, linear combinations of these two modularity vectors define a two-dimensional subspace of the hypersphere.
We observe that the best-performing query vectors among this hyperplane generally do not lie on one of those two modularity lines. As an example, we take the well-known Karate network~\citep{zachary1977information} and compute the performance (in terms of the correlation $\textrm{CC}(T,C)$, as defined in~\eqref{eq:CC}) for a range of query vectors in this two-dimensional set. The results are shown in Figure~\ref{fig:karate_heatmap}. We see that both ER and CM modularity perform reasonably well, but are outperformed by other query vectors. We further see that there is a region of query vectors above the CM-modularity line for which the true community structure is recovered. Figure~\ref{fig:football_heatmap} shows the same experiment for the Football network~\citep{girvan2002community}, where each vertex represents an American college football team, each edge represents a match played between the teams, and each community corresponds to a so-called conference. Again, the ER- and CM-modularity lines are mostly outside of the  region of best performing queries.

Note the sharp transition in Figure~\ref{fig:karate_heatmap}, where the region of `good' query vectors directly neighbors the region of poorly performing query vectors, i.e., the query vectors that lead to correlation $0$. This is easily explained by the fact that the perfectly performing query vectors yield candidates consisting of two communities. Therefore, increasing the query latitude eventually leads to these two communities merging into the clustering consisting of a single community, which has zero correlation to the true clustering. This behavior is different when the ground truth clustering has more than two communities. Indeed,  in Figure~\ref{fig:football_heatmap}, the best-performing query vectors for the Football networks do not lie right next to poorly-performing query vectors since the ground truth clustering consists of 12 communities.

\paragraph{Null models.}
Since the query vectors from Figures~\ref{fig:karate_heatmap} and~\ref{fig:football_heatmap} are linear combinations of the ER- and CM-modularity vector, one may be tempted to think that these query vectors themselves correspond to modularity vectors with some null model that is a mixture of ER and CM. However, this is generally not the case. Specifically, although for each of these query vectors $\vec{q}$ there exist $\gamma,c_1,c_2$ such that  
\[
\vec{q}=\vec{1}+\vec{e}(G)-2\gamma(c_1\vec{p}^{\textrm{ER}}(G)+c_2\vec{p}^{\textrm{CM}}(G)),
\]
it is not generally the case that $c_1$ and $c_2$ are {\em positive}. Because of this, the `expected number of edges' may be negative for certain vertex-pairs. Therefore, this linear combination does not fit the requirements corresponding to a null model. For example, each of the perfectly-performing query vectors of Figure~\ref{fig:karate_heatmap} corresponds to negative values of $c_1$. In Appendix~\ref{apx:dolphins}, we show that for another network, the best-performing query vectors correspond to negative values of $c_2$. 
For this reason, it is not natural to think of these query vectors as corresponding to some null model. Instead, they may simply be viewed as points on the hypersphere that have a good relative position to the ground truth clustering vector.

\begin{figure}
    \centering
    \includegraphics[width=0.65\textwidth]{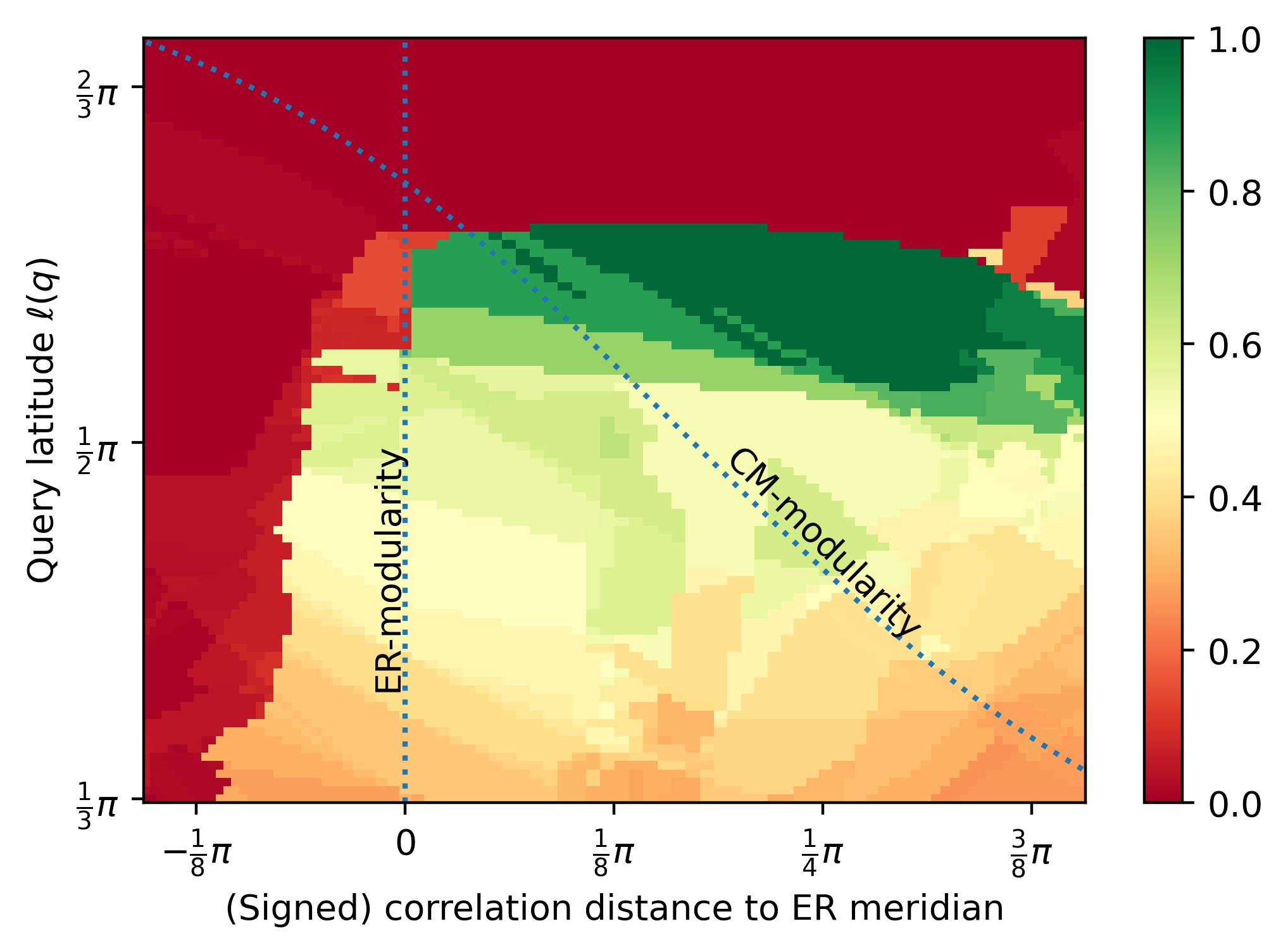}
    \caption{Heatmap of the correlation between the ground truth and the candidate clusterings for the Karate network. We take query vectors from the meridians that $\left(\vec{q}_M^{\textrm{CM}}(G;\gamma)\right)_{\gamma\in[-1.5,2]}$ runs through and vary the query latitude between $\tfrac{1}{3}\pi$ and $\tfrac{2}{3}\pi$. The horizontal coordinates are given by $\textrm{sgn}(\gamma)\cdot\dCC(\vec{q},\vec{e}(G))$, i.e. the (signed) correlation distance to the ER meridian.
    }
    \label{fig:karate_heatmap}
\end{figure}

\begin{figure}
    \centering
    \includegraphics[width=0.65\textwidth]{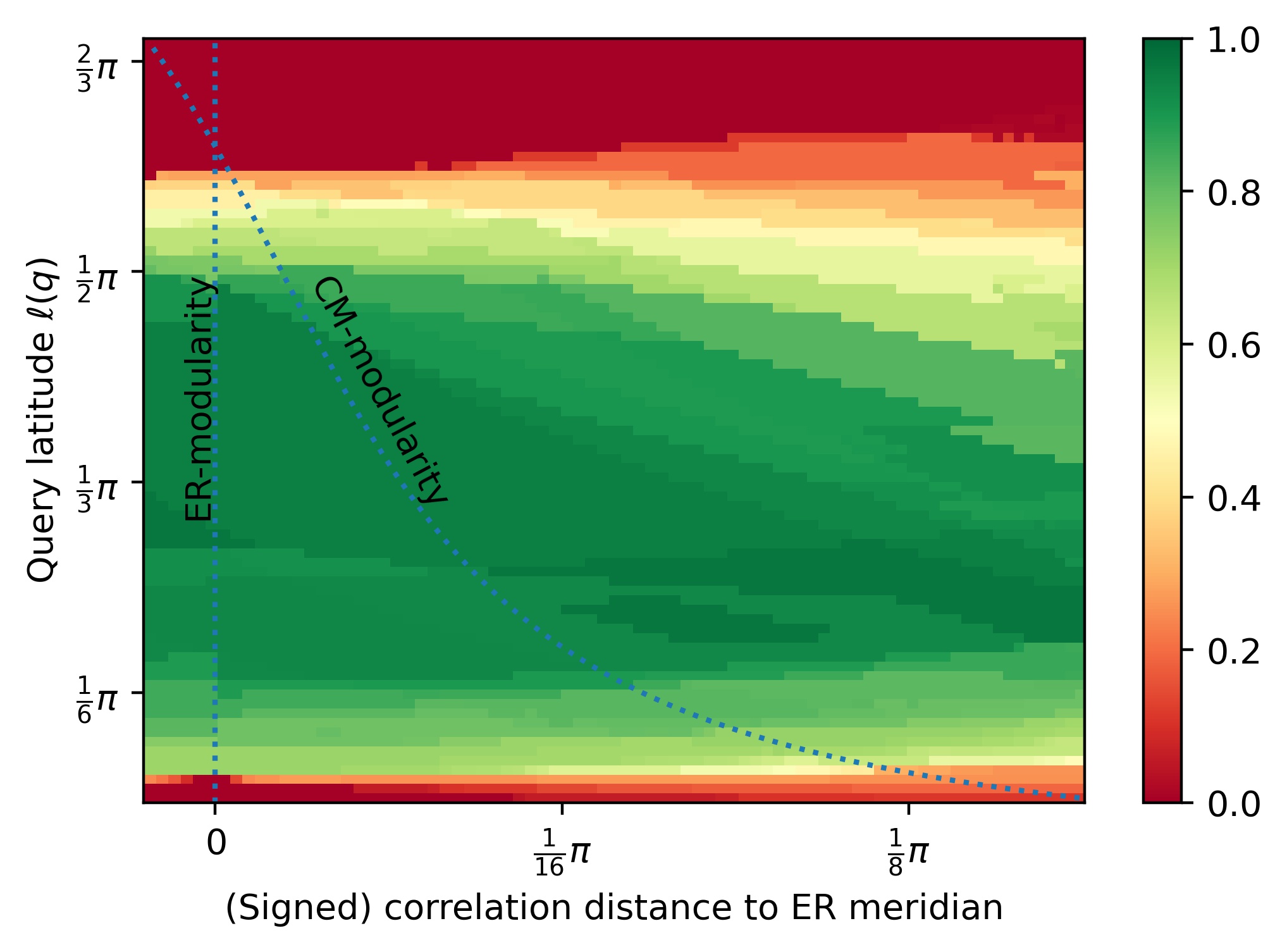}
    \caption{Heatmap of the correlation between the ground truth and the candidate clusterings for the Football network. We take query vectors from the meridians that $\left(\vec{q}_M^{\textrm{CM}}(G;\gamma)\right)_{\gamma\in[-1,14]}$ runs through. The horizontal coordinates are given by $\textrm{sgn}(\gamma)\cdot\dCC(\vec{q},\vec{e}(G))$, i.e. the (signed) correlation distance to the ER meridian. 
    %The `Nearest' meridian is the one that minimizes $\dCC(\vec{q},\vec{b}(T))$.
    }
    \label{fig:football_heatmap}
\end{figure}

\subsection{Query mappings based on common neighbors}\label{sec:wedges}
Similarly to how we expressed the connectivity of a graph $G$ by the edge-connectivity query vector $\vec{e}(G)$ in Section~\ref{sec:modularity_distance}, we can express the number of common neighbors between each pair of vertices in terms of a vector as well.
We denote this vector by $\vec{w}(G)$ and refer to it as the \emph{wedge vector}, because the entry $\vec{w}(G)_{ij}$ corresponds to the number of distinct \emph{wedges} (paths of length 2) that have the vertices $i$ and $j$ as endpoints. Intuitively, $\vec{w}(G)$ is closely related to the community structure of $G$ because disconnected vertices $i$ and $j$ of the same community may have many common neighbors, which makes $\vec{w}(G)_{ij}$ high as well, and thus $\vec{b}(C)$ tends to be closer to $\vec{w}(G)$ when clustering $C$ puts $i$ and $j$ in the same community.  
%In the experiments of Section~\ref{sec:experiments}, we will see that when communities are relatively large compared to vertex-neighborhoods, we will indeed have $\dCC(\vec{w}(G),\vec{b}(T))<\dCC(\vec{e}(G),\vec{b}(T))$.

To provide further motivation for the claim that this wedge vector is of interest, we relate it to the \emph{global clustering coefficient} of a graph $G$, which is defined as thrice the number of triangles (i.e., complete subgraphs of size $3$) divided by the number of wedges (i.e., pairs of adjacent vertices). Note that each triangle consists of three wedges, so that this quantity equals $1$ whenever the graph consists of disconnected cliques, and $0$ when the graph is a tree (or all cycles have length at least $4$). Thus, this clustering coefficient quantifies how much the graph \emph{resembles a clustering}. We prove the following result:
\begin{lemma}
\label{lem:GlobalClusteringWedges}
    The global clustering coefficient of a graph $G$ is given by
    \[
        {\rm GlobalClustering}(G)
        =\frac{1}{2}\left(1-\frac{\cos d_a(\vec{e}(G),\vec{w}(G))}{\cos\ell(\vec{w}(G))}\right).
    \]
\end{lemma}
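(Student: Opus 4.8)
The plan is to unfold the two hyperspherical quantities on the right-hand side into their defining inner products, observe that a common normalisation cancels, and then match the surviving combinatorial ratio against the definition ${\rm GlobalClustering}(G)=3\tau/\omega$, where I write $\tau$ for the number of triangles and $\omega$ for the number of wedges of $G$.

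First I would expand both cosines using \eqref{eq:angular_distance} and \eqref{eq:latitude}. Since $\vec{e}(G)$ is a $\pm1$ vector we have $\|\vec{e}(G)\|=\sqrt{N}$, so
\[
\cos d_a(\vec{e}(G),\vec{w}(G))=\frac{\langle\vec{e}(G),\vec{w}(G)\rangle}{\sqrt{N}\,\|\vec{w}(G)\|},
\qquad
\cos\ell(\vec{w}(G))=\frac{-\langle\vec{w}(G),\vec{1}\rangle}{\sqrt{N}\,\|\vec{w}(G)\|}.
\]
Dividing the first by the second makes the factor $\sqrt{N}\,\|\vec{w}(G)\|$ disappear, so the right-hand side of the lemma collapses to $\tfrac12\bigl(1+\langle\vec{e}(G),\vec{w}(G)\rangle/\langle\vec{w}(G),\vec{1}\rangle\bigr)$. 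Crucially the norm $\|\vec{w}(G)\|$ never needs to be computed; only the two inner products $\langle\vec{w}(G),\vec{1}\rangle$ and $\langle\vec{e}(G),\vec{w}(G)\rangle$ survive, and both are elementary to count.

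The combinatorial core is the bookkeeping of wedges by their endpoint-pair. Because every wedge has a unique pair of endpoints and $\vec{w}(G)_{ij}$ counts exactly the wedges with endpoints $i,j$, summing over all pairs gives $\langle\vec{w}(G),\vec{1}\rangle=\sum_{ij}\vec{w}(G)_{ij}=\omega$. For the other inner product I split the sum according to whether the endpoint-pair is an edge: since $\vec{e}(G)_{ij}=\pm1$ on edges/non-edges, we get $\langle\vec{e}(G),\vec{w}(G)\rangle=\omega_{\mathrm{E}}-\omega_{\mathrm{N}}=2\omega_{\mathrm{E}}-\omega$, where $\omega_{\mathrm{E}}$ (resp.\ $\omega_{\mathrm{N}}$) is the number of wedges whose endpoints are (resp.\ are not) adjacent. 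The key identity is $\omega_{\mathrm{E}}=3\tau$: a wedge with adjacent endpoints is a triangle together with a choice of its apex, so closed wedges biject with (triangle, apex)-pairs and each triangle contributes exactly three. Substituting $\langle\vec{e}(G),\vec{w}(G)\rangle=6\tau-\omega$ and $\langle\vec{w}(G),\vec{1}\rangle=\omega$ into the reduced expression gives $\tfrac12\bigl(1+(6\tau-\omega)/\omega\bigr)=3\tau/\omega$, which is precisely ${\rm GlobalClustering}(G)$.

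I expect the only real obstacle to be the combinatorial identification $\omega_{\mathrm{E}}=3\tau$ together with the sign bookkeeping that turns the $\pm1$ pattern of $\vec{e}(G)$ into $\omega_{\mathrm{E}}-\omega_{\mathrm{N}}$; once these are pinned down, the geometric content reduces to the single cancellation of $\sqrt{N}\,\|\vec{w}(G)\|$. The one case needing a word of care is the degenerate $\omega=0$ (a graph of maximum degree one), where $\vec{w}(G)=\vec{0}$ makes both $\ell(\vec{w}(G))$ and ${\rm GlobalClustering}(G)$ undefined; the lemma is to be read for graphs possessing at least one wedge.
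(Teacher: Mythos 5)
Your proof is correct and is essentially the paper's own argument, just run in the reverse direction: the paper starts from ${\rm GlobalClustering}(G)=3\tau/\omega$, expresses $3\tau=\langle\tfrac{1}{2}(\vec{e}(G)+\vec{1}),\vec{w}(G)\rangle$ and $\omega=\langle\vec{1},\vec{w}(G)\rangle$, and then translates the resulting ratio into the cosines, whereas you unfold the cosines first and then match the surviving inner products to the same wedge/triangle counts (your $\omega_{\mathrm{E}}=3\tau$ identity and sign bookkeeping are exactly the paper's "summing $\vec{w}(G)_{ij}$ over edges gives thrice the number of triangles," phrased via $\omega_{\mathrm{E}}-\omega_{\mathrm{N}}$ instead of the $\{0,1\}$-vector $(\vec{e}(G)+\vec{1})/2$). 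Your remark on the degenerate case $\omega=0$ is a small point of care the paper omits.
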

\begin{proof}
    The global clustering coefficient is given by the fraction of wedges that are closed. Since each triangle consists of three closed wedges, this coefficient is given by thrice the number of triangles divided by the number of wedges. 
    If there is an edge between $i$ and $j$, then this edge $ij$ is part of exactly $\vec{w}(G)_{ij}$ triangles. Summing $\vec{w}(G)_{ij}$ over all edges thus gives thrice the number of triangles, as each triangle contains three edges. In vector notation, this is given by $\langle\tfrac{1}{2}(\vec{e}(G)+\vec{1}),\vec{w}(G)\rangle$, where $(\vec{e}(G)+\vec{1})/2$ is the $\{0,1\}$-binary vector edge-connectivity vector. 
    The total number of wedges is given by $\langle\vec{1},\vec{w}(G)\rangle$. Combined, this allows us to write the clustering coefficient as
    \[
        \textrm{GlobalClustering}(G)=\frac{\langle\tfrac{1}{2}(\vec{e}(G)+\vec{1}),\vec{w}(G)\rangle}{\langle\vec{1},\vec{w}(G)\rangle}=\frac{1}{2}\left(1+\frac{\langle\vec{e}(G),\vec{w}(G)\rangle}{\langle\vec{1},\vec{w}(G)\rangle}\right).
    \]
    Finally, from the definitions of the angular distance and latitude, as given by~\eqref{eq:angular_distance} and~\eqref{eq:latitude}, it follows that
    \[
    \frac{\langle\vec{e}(G),\vec{w}(G)\rangle}{\langle\vec{1},\vec{w}(G)\rangle}=-\frac{\cos d_a(\vec{e}(G),\vec{w}(G))}{\cos\ell(\vec{w}(G))},
    \]
    which completes the proof.
\end{proof}

Since $\vec{w}(G)$ is non-negative, by \eqref{eq:latitude}, it always has a latitude at least $\pi/2$, so we always have $\cos\ell(\vec{w}(G))\leq0$. Then  Lemma~\ref{lem:GlobalClusteringWedges} tells us that the global clustering coefficient of a graph $G$ is high whenever the angular distance between $\vec{w}(G)$ and $\vec{e}(G)$ is low. This relation to the global clustering coefficient shows that the wedge vector indeed contains relevant information about our graph.
However, if we would directly use the vector $\vec{w}(G)$ as a query vector, we would run into a problem because $\vec{w}(G)$ has a latitude at least $\pi/2$, and usually much higher than $\pi/2$, so the resulting Louvain candidate $\mathcal{L}(\vec{w}(G))$ groups all vertices into the same community.

The most straightforward option is to project $\vec{w}(G)$ to the equator, resulting in the query mapping $\vec{q}_w(G)=\mathcal{P}_{\pi/2}(\vec{w}(G))$. The motivation for this choice is that, by
Lemma~\ref{lem:modularity_latitude}, the modularity vector has latitude $\pi/2$ for the default resolution parameter value $\gamma=1$. Thus, the projection to the equator is comparable to the most common form of modularity.  Being on the equator implies that $\langle\vec{q}_w(G),\vec{1}\rangle=0$. Then, the positive entries of $\vec{q}_w(G)$ correspond to vertex-pairs that have more common neighbors than the graph average (over the vertex-pairs), while negative entries correspond to vertex-pairs that have less common neighbors than average. An approximate nearest neighbor of $\vec{q}_w(G)$ then corresponds to a clustering where many of the intra-cluster pairs have more common neighbors than average, while many of the inter-cluster pairs have fewer common neighbors than average. This wedge vector can also be projected to other latitudes to obtain a clustering of the desired granularity.

\begin{figure}
    \centering
    \includesvg[width=0.7\textwidth]{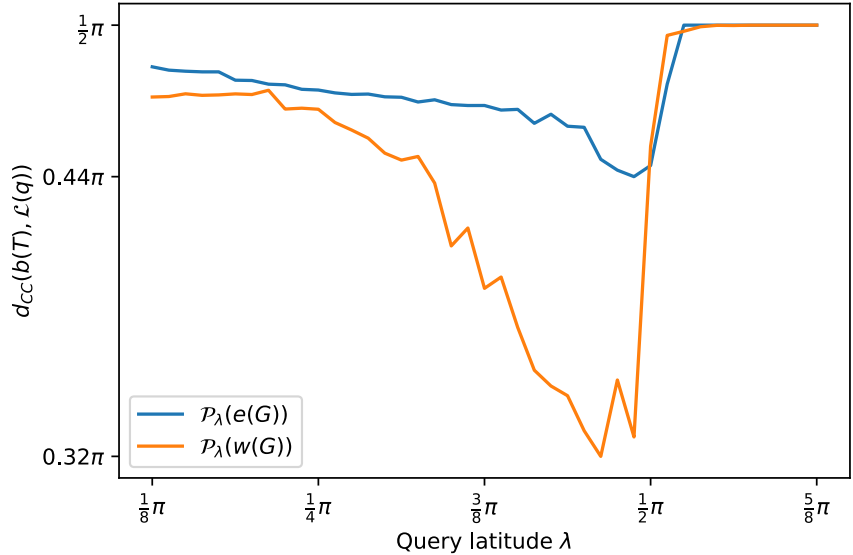}
    \caption{Comparing the performances of query vectors based on the wedge vector $\vec{w}(G)$ and edge vector $\vec{e}(G)$ for a PPM consisting of $10$ communities, each of size $100$, with expected inter- and intra-community degree 5.}
    \label{fig:wedges_experiment}
\end{figure}

\paragraph{Detecting large communities.}
We observe that the wedge vector is especially useful as query vector when the community sizes are significantly larger than the average degrees, or equivalently, when $\ell(\vec{b}(T))$ is significantly larger than $\ell(\vec{e}(G))$. In Figure~\ref{fig:wedges_experiment} we compare wedge-based query vectors to edge-based query vectors on a Planted Partition Model\footnote{That is, a Stochastic Block Model where the block matrix has diagonal entries $p_{\text{in}}$ and off-diagonal entries $p_{\text{out}}$.} (PPM) with 10 communities each of size $100$ and average degree $10$, which gives $\ell(\vec{e}(G))=0.064\pi$ and $\ell(\vec{b}(T))=0.204\pi$. We see that, for the right query latitude, the wedge-based query vectors result in candidates with correlation distance $0.32\pi$ to the ground truth clustering, while the edge-based query vectors only result in candidates with correlation distance $0.44\pi$, which translate to correlation coefficients of $0.54$ and $0.20$ respectively. 
This significant difference in performance may be explained by the fact that the edge vector only connects each vertex to 5 of its 99 community members on average, while the wedge vector connects each vector to approximately 25 community-members.

\paragraph{Computational cost.}
A practical downside of projection methods based on wedge vectors compared to edge vectors is the higher computational cost. Louvain is known to run in roughly log-linear time in terms of the number of edges. When a network is sparse, meaning that the number of edges is of the same order as the number of vertices, this results in a computational time that is log-linear  in the number of vertices.
Similarly, when applying Louvain to the wedge vector, the computational time is roughly log-linear in the number of vertex-pairs that have wedges between them. This can be upper-bounded by the total number of wedges, which is given by
\[
\langle\vec{w}(G),\vec{1}\rangle=\sum_{i\in[n]}{d_i^{(G)}\choose2}.
\]
If the degree distribution of the network has a finite second moment, then the expected value of this sum is $\bigO(n)$, so that Louvain will run in $\bigO(n\log n)$. However, many real-world networks are known to be scale-free~\citep{barabasi2013network,fortunato2010community,stegehuis2016power,voitalov2019scale}, meaning that their degree distribution has a finite mean, but infinite second moment. 
In such cases, the total number of wedges is $\Omega(n)$, leading to a higher computational costs for projection methods based on wedge vectors.
For this reason, using projection methods based on the wedge vector may be computationally infeasible in large scale-free networks.

\section{Empirical analysis of projection methods}\label{sec:experiments}
In this section, we perform more  experiments to demonstrate how the geometric results reported in this paper help to interpret and understand outcomes of different projection methods. In Section~\ref{sec:observations}, we illustrate some phenomena that we empirically  observe for many networks and projection methods, while in Section~\ref{sec:realworld}, we compare the performances of projection methods with different query mappings on several real-world networks.

The code that was used to perform the experiments and generate the figures of this paper is available on GitHub\footnote{\url{https://github.com/MartijnGosgens/hyperspherical_community_detection}}. Amongst others, this repository contains a Python implementation of our modification of the Louvain algorithm. This implementation is able to compute the Louvain projection for a range of query vectors, including all query vectors that are linear combinations of $\vec{1}$, $\vec{e}(G)$, $\vec{w}(G)$, $\vec{q}_M^{\textrm{CM}}(G;\gamma)$, and $\vec{q}_M^{\textrm{ER}}(G;\gamma)$.

\subsection{The Louvain candidate}\label{sec:observations}
In this section, we describe how several geometric quantities are affected when the query latitude is varied. The relevant quantities are summarized in Table~\ref{tab:quantities}.
For an illustrational experiment, we use query vectors on the ER meridian to detect communities in a Planted Partition Model (PPM). The ground truth clustering consists of $20$ communities, each of size $20$, while the mean intra- and inter-community degrees are $6$ and $4$, respectively.
The four plots in  Figure~\ref{fig:ppm_experiment} illustrate our empirical observations, which we now explain in detail.

\begin{table}[]
    \centering
    \begin{footnotesize}
    \begin{tabular}{c|L{11cm}}
\toprule
Quantity &  Description \\
\midrule
$\ell(\vec{b}(C))$ & Latitude of candidate clustering, measure of granularity.\\
$\ell(\vec{q})$ & Latitude of query vector. Related to resolution parameter for modularity vectors.\\
$d_a(\vec{q},\vec{b}(C))$ & Angular distance between query vector and candidate clustering, the quantity that is minimized by Louvain.\\
$d_a(\vec{q},\vec{b}(T))$ & Angular distance between the query vector and the ground truth clustering.\\
$\dCC(\vec{q},\vec{b}(C))$ & Correlation distance between the query vector and the candidate clustering. \\
$\dCC(\vec{q},\vec{b}(T))$ & Correlation distance between the query vector and the ground truth clustering, measure of how informative $\vec{q}$ is for detecting $T$.\\
%$d_a(\vec{b}(T),\vec{b}(C))$ & Angular distance between ground truth and candidate clusterings, a biased performance measure equivalent to the Rand index.\\
$\dCC(\vec{b}(T),\vec{b}(C))$ & Correlation distance between the ground truth and the candidate clusterings, the clustering performance measure used in this article.\\
\bottomrule
\end{tabular}
    \end{footnotesize}
    \caption{Description of quantities in terms of the query vector $\vec{q}$ and the corresponding Louvain candidate clustering $\vec{b}(C)=\mathcal{L}(\vec{q})$.}
    \label{tab:quantities}
\end{table}

\begin{figure}
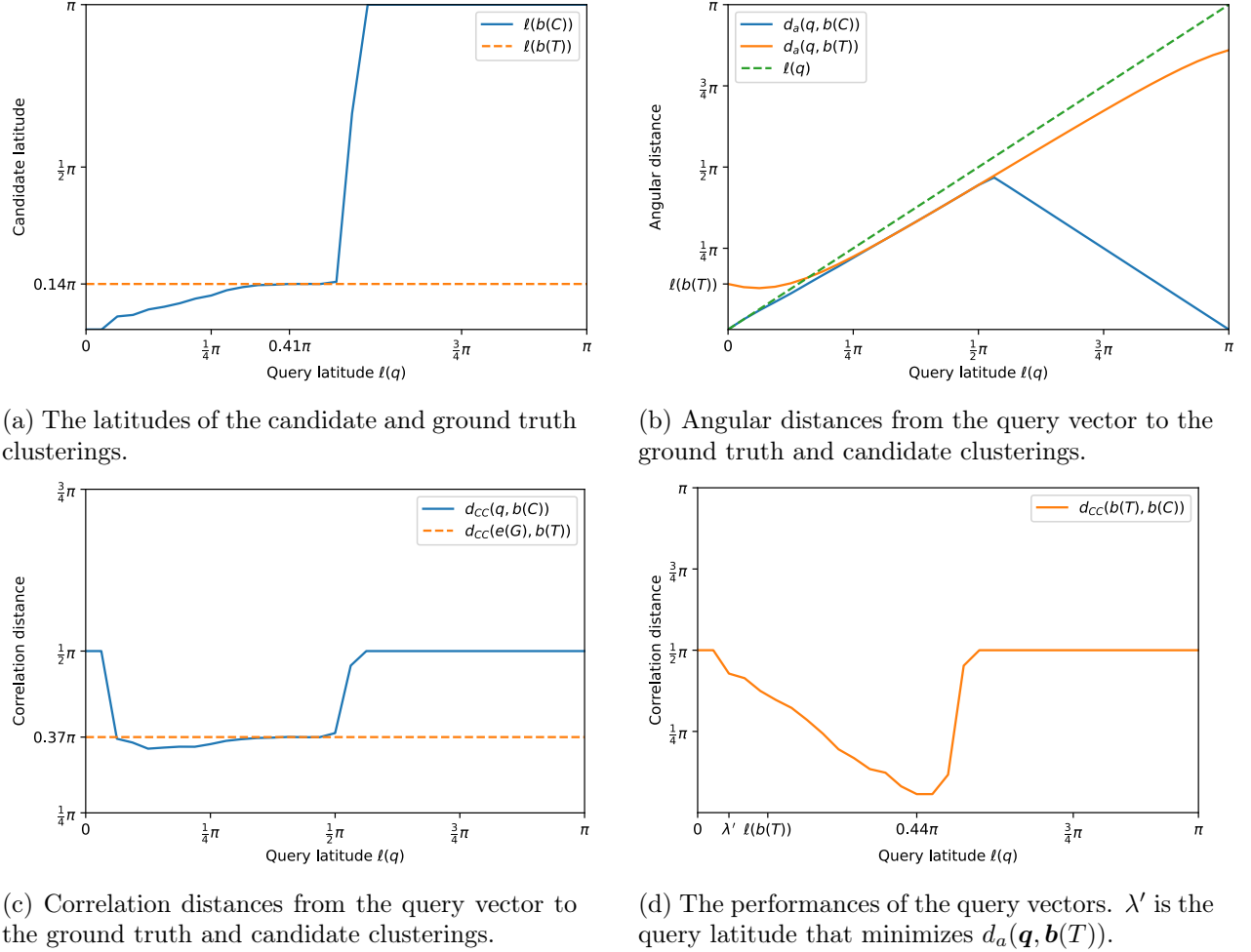

     \centering
     \begin{subfigure}[b]{0.475\textwidth}
         \centering
         \includesvg[height=5.5cm]{svgs/ppm_experiment_cluster_lats}
         \caption[c]{The latitudes of the candidate and ground truth clusterings.}
         \label{fig:ppm_experiment_cluster_lats}
     \end{subfigure}
     \hfill
     \begin{subfigure}[b]{0.475\textwidth}
         \centering
         \includesvg[height=5.5cm]{svgs/ppm_experiment_angular_dists}
         \caption[c]{Angular distances from the query vector to the ground truth and candidate clusterings.}
         \label{fig:ppm_experiment_angular_dists}
     \end{subfigure}
    \vskip\parskip
    \begin{subfigure}[b]{0.475\textwidth}
         \centering
         \includesvg[height=5.5cm]{svgs/ppm_experiment_correlation_dists}
         \caption[c]{Correlation distances from the query vector to the ground truth and candidate clusterings.}
         \label{fig:ppm_experiment_correlation_dists}
     \end{subfigure}
     \hfill
     \begin{subfigure}[b]{0.475\textwidth}
         \centering
         \includesvg[height=5.5cm]{svgs/ppm_experiment_validation}
         \caption[c]{The performances of the query vectors. $\lambda'$ is the query latitude that minimizes $d_a(\vec{q},\vec{b}(T))$.}
         \label{fig:ppm_experiment_correlation_triangle}
     \end{subfigure}
    \vskip\parskip
    %\vskip\baselineskip
        \caption{Plot of various quantities for a PPM consisting of $20$ communities each of size $20$, with mean intra-community degree $6$ and inter-community degree $4$. The query vectors are given by $\vec{q}=\mathcal{P}_\lambda(\vec{e}(G))$, where $\lambda$ is varied between $0$ and $\pi$, so that these query vectors lie on the ER meridian (see~\eqref{eq:parallel_projection}). The candidate clusterings are obtained using the Louvain algorithm, i.e., $\vec{b}(C)=\mathcal{L}(\vec{q})$.}
        \label{fig:ppm_experiment}
\end{figure}

\paragraph{Candidate latitude.}
In Figure~\ref{fig:ppm_experiment_cluster_lats}, we show how the latitude of the candidate clustering changes with the latitude of the query vector. We see that the candidate latitude equals $0$ when the query latitude is $0$ (since Louvain is a projection, and latitude 0 corresponds to the fine pole where each vertex forms its own cluster). Then the latitude of the candidate gradually increases until it roughly reaches the ground truth latitude (the dashed orange line), plateaus for a bit and then sharply shoots up to $\pi$.
This is exactly what we could expect: in the part before the plateau, Louvain clusters together only a few subgraphs with a high edge-density, while after the plateau, Louvain starts to merge communities together.

\paragraph{Angular distances w.r.t.\ query vector.}
Figure~\ref{fig:ppm_experiment_angular_dists} compares the angular distances between query vector and ground truth clustering, and between query vector and candidate clustering. When the query latitude equals $0$, i.e., for $\vec{q}=-\vec{1}$, we have $d_a(\vec{q},\vec{b}(T))=\ell(\vec{b}(T))$ by definition of the latitude. The angular distance to the ground truth then slightly decreases before it steadily increases to $\pi-\ell(\vec{b}(T))$ at query latitude $\pi$. The angular distance to the candidate clustering, on the other hand, starts at $0$, exactly as in Figure~\ref{fig:ppm_experiment_cluster_lats}, and increases almost linearly to roughly $\tfrac{1}{2}\pi$, before it decreases linearly to $0$. The linear decrease corresponds to the segment where $\ell(\vec{b}(C))=\pi$, so that $d_a(\vec{q},\vec{b}(C))=\pi-\ell(\vec{q})$.

Additionally, Figure~\ref{fig:ppm_experiment_angular_dists} shows that $d_a(\vec{q},\vec{b}(C))\leq\ell(\vec{q})$ holds for all query latitudes, as claimed in~\eqref{eq:louvain_dist}. We also see that the two angular distances (the blue and the orange lines) are approximately equal to each other on the interval roughly between $\tfrac{1}{4}\pi$ and $\tfrac{1}{2}\pi$.

Furthermore, we observe that $d_a(\vec{q},\vec{b}(C))\leq d_a(\vec{q},\vec{b}(T))$ holds in all cases. That is, the candidate clustering is always at least as close to the query vector as the ground truth clustering. By the equivalence of Theorem~\ref{thm:modularity_distance}, this means that the modularity of the candidate clustering is at least as high as that of the ground-truth clustering. This behavior has previously been observed in various real-world and synthetic networks~\citep{prokhorenkova2019community}, and may be explained by the fact that the modularity landscape is \emph{glassy}~\citep{good2010performance}. That is, there are many clusterings with locally optimal modularity values very close to the global optimum. The greedy optimization that Louvain utilizes seems to be quite successful in reaching one of those local optima, while the ground truth clustering is not guaranteed to correspond to a local modularity optimum at all. This phenomenon seems to also hold when minimizing the angular distance to other query vectors. 

\paragraph{Correlation distances w.r.t.\ query vector.}
When comparing the correlation distances from the query vector to the ground truth and candidate clusterings in Figure~\ref{fig:ppm_experiment_correlation_dists}, we observe that $\dCC(\vec{q},\vec{b}(C))\approx \dCC(\vec{q},\vec{b}(T))$ for a large range of query latitudes. This is interesting since the candidate clusterings in this range do have significantly different latitudes, as shown in Figure~\ref{fig:ppm_experiment_cluster_lats}. After that, the correlation distance jumps to $\tfrac{1}{2}\pi$ and stays constant, since the candidate clustering then corresponds to the coarse pole, so that the correlation distance is $\tfrac{1}{2}\pi$ by definition.

\paragraph{Correlation distance between clusterings.}
Figure~\ref{fig:ppm_experiment_correlation_triangle} shows the correlation distance between the ground truth and candidate clusterings. We use this measure to quantify the performance of the detection algorithm.
We see that when the query latitude is zero, this quantity equals $\tfrac{1}{2}\pi$, indicating that the candidate clustering is uncorrelated to the ground truth. For higher query latitudes, this correlation distance decreases to a minimum at query latitude $0.44\pi$, after which it increases back to $\tfrac{1}{2}\pi$ again. The best performance is thus obtained around a query latitude of $0.44\pi$ for this network.

If, in advance, one would guess what query latitudes leads to the best performance, then the following two obvious guesses may come to mind: On the one hand, one may think that setting the query latitude equal to the ground truth latitude $\ell(\vec{b}(T))$ would result in a candidate of a similar latitude. 
On the other hand, the fact that Louvain finds a clustering latitude close to the query vector suggests that choosing the query latitude $\lambda'$ that minimizes the angular distance between query vector and ground truth vector would work well. It turns out that both of these options perform poorly in general, as shown in Figure~\ref{fig:ppm_experiment_correlation_triangle}. 
Instead, the best performance seems to be achieved around the query latitude for which the candidate latitude intersects the ground truth latitude in Figure~\ref{fig:ppm_experiment_cluster_lats}. In Figure~\ref{fig:ppm_experiment_cluster_lats}, we also saw that the candidate latitude is generally significantly smaller than the query latitude, so that one indeed needs query latitudes much larger than $\ell(\vec{b}(T))$ in order to obtain a candidate clustering with $\ell(\vec{b}(C))=\ell(\vec{b}(T))$. This tells us that both these initial guesses are wrong, and that instead we need to set the query latitude to some value larger than $\ell(\vec{b}(T))$ for the best performance. 

We generally observe that the difference between the best-performing query latitude and the ground truth latitude is large whenever the correlation distance $\dCC(\vec{q},\vec{b}(T)))$ between the query vector and the ground truth clustering is large. Indeed, when $\dCC(\vec{q},\vec{b}(T))=0$ and $\ell(\vec{q})=\ell(\vec{b}(T))$, it also holds that $\vec{q}=\vec{b}(T)$ so that $\mathcal{L}(\vec{q})=\vec{b}(T)$ follows from the fact that Louvain is a projection.
Of course, in practical applications, $\dCC(\vec{q},\vec{b}(T))$ and $\ell(\vec{b}(T))$ are unknown, making it difficult to know in advance which query latitude performs best.
Finding this optimal query latitude in practical settings is beyond the scope of this paper.

\subsection{Experiments on real-world networks}\label{sec:realworld}
In this section, we compare different projection methods on real-world networks. We consider both projection methods that fall inside the class of modularity-based methods and methods that fall outside this class.
We consider four sets of query mappings, each corresponding to a straight line on the hypersphere that is parametrized by the query latitude.
More specifically, we consider ER and CM modularity for various resolution parameters, the wedge-based query mappings from Section~\ref{sec:wedges} for various latitudes, and the meridian corresponding to the CM-modularity vector for resolution parameter $\gamma=1$, which we refer to as the \emph{CM meridian}.
Note that of these four lines on the hypersphere, CM modularity is the only one that does not correspond to a meridian.

For these experiments, we consider 6 real-world networks from~\cite{prokhorenkova2019community} that have known ground truth communities.
These networks are: 
1)~Zachary's well-known \emph{karate club} network~\citep{zachary1977information}, 
2)~Lusseau's network of bottlenose \emph{dolphins}~\citep{lusseau2004identifying}, 
3)~a network of \emph{political books} grouped by political affiliation~\citep{newman2006modularity}, 
4)~a network of college \emph{football} teams grouped by `conference'~\citep{girvan2002community}, 
5)~the \emph{EU-core} network of European researchers linked by email traffic and grouped by department, and
6)~a network of \emph{political blogs} concerning the US presidential election of 2004, grouped by party affiliation~\citep{adamic2005political}.
An overview of these networks is given in Table~\ref{tab:datasets}.

The repository of \cite{prokhorenkova2019community} also contains two other networks: a network of Internet systems that are grouped by a geometric clustering algorithm, and a citation network that is grouped by a text-clustering algorithm. We chose to not include these two networks in the experiments since their corresponding `ground truth' clusterings were obtained by different clustering algorithms, so that they may best be considered candidate clusterings rather than ground truth. Furthermore, these networks are significantly larger than the ones considered here ($n>20000$), which results in long running times for our modification\footnote{We implemented our algorithm in Python, which is significantly slower than C++ for such tasks.} of Louvain, especially when using the wedge-based query vector, as explained in Section~\ref{sec:wedges}.

\paragraph{The optimal query latitude.}
The results of the experiments can be found in Figure~\ref{fig:benchmarks}. We use the correlation distance between the ground truth clustering and the candidate clustering to measure the performance of the community detection method, where lower values indicate better performance.
By comparing the locations of the minima for different type of query vectors (e.g., the wedges and ER meridian in Figure~\ref{fig:football_benchmark}), we see that the optimal query latitude is dependent on the network and query type. Furthermore, while the best-performing query vectors are often located roughly around the equator, there also are many networks where the best latitude is quite far from the equator (e.g., Figures~\ref{fig:karate_benchmark} and~\ref{fig:football_benchmark}).
In Section~\ref{sec:observations}, we have discussed that the optimal query latitude is generally larger than the ground truth latitude $\ell(\vec{b}(T))$. In Figure~\ref{fig:benchmarks}, we have marked the ground truth latitudes on the horizontal axes. It can be seen that the minima are indeed located at larger query latitudes, with the exceptions of  the Football and Political blogs networks, where some minima seem to coincide with the ground truth latitude.
Note that the exact location of these minima can only be determined when the ground truth is known, which is generally not the case in practice.
Finding the best-performing query latitude without knowledge of the ground truth is beyond the scope of this article.
In the remainder of this section, we compare the four methods by comparing their minima.

\paragraph{CM meridian versus modularity.}
Recall that CM modularity corresponds to a standard community detection method, while CM meridian is one of the possible alternative community detection methods that emerge from our generalization of modularity-based methods.
Interestingly, CM meridian perfectly recovers the ground truth for Karate. In terms of Figure~\ref{fig:karate_heatmap}, this means that this meridian intersects the region of perfectly-performing query vectors. For the other networks, CM meridian seems to perform on par with CM modularity. The only exception is the EU-core, where CM modularity clearly outperforms CM meridian.

\paragraph{Correlation distance between the ground truth and query vectors.}
The correlation distance between the query vector and ground truth clustering vector seems to carry some information about which null model performs best: for all networks where CM modularity or meridian outperforms ER modularity, we see that  $\dCC(\vec{q}_M^{\textrm{CM}}(G),\vec{b}(T))<\dCC(\vec{q}_M^{\textrm{ER}}(G),\vec{b}(T))$ holds in Table~\ref{tab:datasets}. This suggests using the correlation distance between the modularity vector and the ground truth as criteria to choose a suitable null model.
However, a smaller correlation distance between the query vector and ground truth does not guarantee better performance. For example, $\vec{w}(G)$ is closer to the ground truth for all networks except EU-core (see Table~\ref{tab:datasets}), while the Football network is the only network for which the wedge vector actually outperforms the other vectors.

\paragraph{Performance of the wedge vector.}
As discussed in Section~\ref{sec:wedges}, we expected the wedge vector to perform well whenever the latitude of the ground truth vector is significantly larger than the latitude of the edge vector. We do not observe this to hold in the considered real-world networks: In Table~\ref{tab:datasets}, we see that for the Political blogs network, we have $\ell(\vec{b}(T))=0.500\pi>0.096\pi=\ell(\vec{e}(G))$ so that we would expect the wedge vector to perform well, while it performs poorly compared to the other methods. Moreover, the only network for which the wedge vector visibly outperforms the other methods is the Football network for which $\ell(\vec{e}(G))>\ell(\vec{b}(T))$. We expect that this difference in performance between the PPM of Section~\ref{sec:wedges} and the real-world networks can be explained by degree-inhomogeneity: a pair of high-degree vertices are likely to have many common-neighbors, regardless of whether they belong to the same community. This is similar to how ER modularity is known to cluster together high-degree vertices. This suggests `correcting' for this degree inhomogeneity in a similar way that CM modularity does: by subtracting a multiple of $d_i^{(G)}d_j^{(G)}$ from every entry.

In summary, we see that the developed geometry helps us to interpret results of community detection methods. In addition, we see that the class of projection methods contains many new community detection methods that may outperform existing modularity-based methods on real-world networks. Furthermore, we see that the correlation distance between the query vector and ground truth clustering can to some extend help in predicting which projection method will perform best.

\begin{table}[]
    \centering
    \begin{scriptsize}
    \begin{tabular}{l|rrrccccc}
\toprule
Dataset &    $n$ &  $m_G$ &  $|T|$ & $\ell(\vec{b}(T))$ & $\ell(\vec{e}(G))$ & $\dCC(\vec{q}_M^{\textrm{ER}}(G),\vec{b}(T))$ & $\dCC(\vec{q}_M^{\textrm{CM}}(G),\vec{b}(T))$ & $\dCC(\vec{w}(G),\vec{b}(T))$\\
\midrule
Karate club   &     34 &     78 &      2 &         $0.491\pi$ & $0.243\pi$&                      $0.400\pi$ &              $0.388\pi$    &      $0.342\pi$ \\
Dolphins &     62 &    159 &      2 &         $0.536\pi$& $0.187\pi$ &                      $0.420\pi$ &                 $0.422\pi$     &  $0.364\pi$ \\
Political books &    105 &    441 &      3 &         $0.433\pi$& $0.183\pi$ &                      $0.413\pi$ &               $0.414\pi$     &    $0.344\pi$ \\
Football &    115 &    613 &     12 &         $0.181\pi$& $0.198\pi$ &                      $0.248\pi$ &                 $0.248\pi$      &  $0.186\pi$\\
EU-core  &   1005 &  16706 &     42 &         $0.139\pi$& $0.116\pi$ &                      $0.411\pi$ &               $0.403\pi$      &    $0.444\pi$\\
Political blogs &   1224 &  16718 &      2 &         $0.500\pi$& $0.096\pi$ &                      $0.461\pi$ &          $0.458\pi$        &      $0.424\pi$ \\
%Cora citations~\citep{vsubelj2013model}     &  23166 &  89157 &     70 &         $0.094\pi$ &                      $0.479\pi$ &                      $0.455\pi$ \\
%Autonomous Systems~\citep{boguna2010sustaining}       &  23752 &  58416 &    176 &         $0.274\pi$ &                      $0.497\pi$ &                      $0.476\pi$ \\
\bottomrule
\end{tabular}
    \end{scriptsize}
    \caption{Overview of the considered real-world networks. For each network, we show the number of vertices $n$, the number of edges $m_G$ and the number of ground truth communities $|T|$. We also show the following angles: $\ell(\vec{b}(T))$, a measure of the granularity of the ground truth clustering; $\ell(\vec{e}(G))$, a measure of the edge-density of the network; and $\dCC(\vec{q},\vec{b}(T))$, the correlation distance between the ground truth clustering and the query vector, for $\vec{q}\in\{\vec{q}_M^{\textrm{ER}}(G),\vec{q}_M^{\textrm{CM}}(G),\vec{w}(G)\}$, where the modularity vectors use the default resolution parameter value $\gamma=1$.}
    \label{tab:datasets}
\end{table}

\begin{figure}
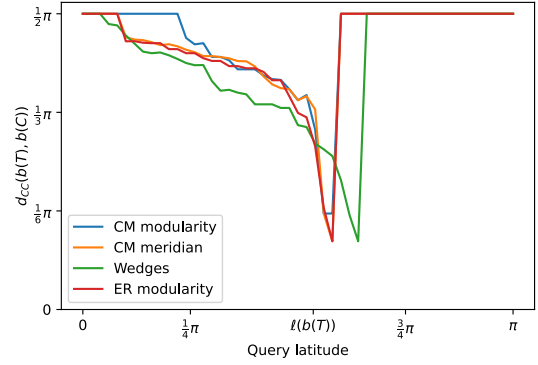
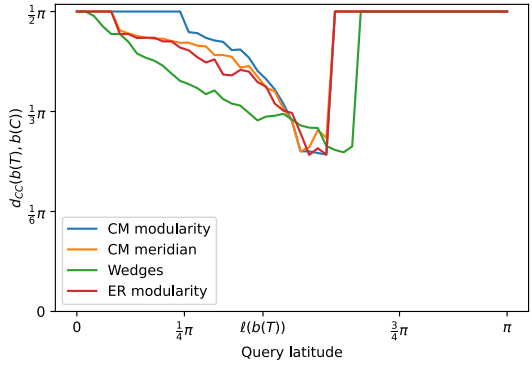
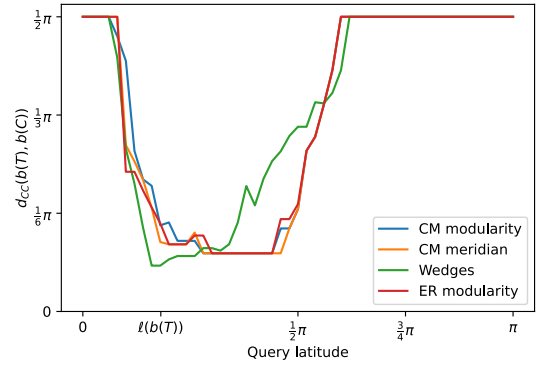
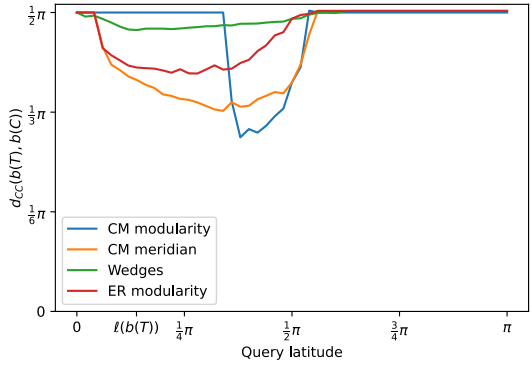
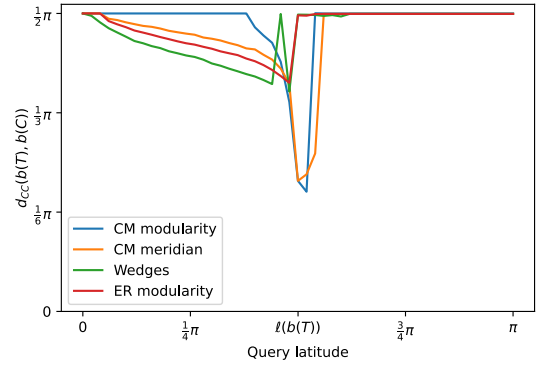

     \centering
     \begin{subfigure}[b]{0.475\textwidth}
         \centering
         \includesvg[height=5cm,inkscapename=a\thefigure]{svgs/karate_benchmark}
         \caption[c]{Performance for Karate}
         \label{fig:karate_benchmark}
     \end{subfigure}
     \hfill
     \begin{subfigure}[b]{0.475\textwidth}
         \centering
         \includesvg[height=5cm,inkscapename=b\thefigure]{svgs/dolphins_benchmark}
         \caption[c]{Performance for Dolphins}
         \label{fig:dolphins_benchmark}
     \end{subfigure}
    \vskip\baselineskip
    \begin{subfigure}[b]{0.475\textwidth}
         \centering
         \includesvg[height=5cm,inkscapename=c\thefigure]{svgs/polbooks_benchmark}
         \caption[c]{Performance for Political books}
         \label{fig:polbooks_benchmark}
     \end{subfigure}
     \hfill
     \begin{subfigure}[b]{0.475\textwidth}
         \centering
         \includesvg[height=5cm,inkscapename=d\thefigure]{svgs/football_benchmark}
         \caption[c]{Performance for Football}
         \label{fig:football_benchmark}
     \end{subfigure}
    \vskip\baselineskip
    \begin{subfigure}[b]{0.475\textwidth}
         \centering
         \includesvg[height=5cm,inkscapename=e\thefigure]{svgs/eu-core_benchmark}
         \caption[c]{Performance for EU-core}
         \label{fig:eu-core_benchmark}
     \end{subfigure}
     \hfill
     \begin{subfigure}[b]{0.475\textwidth}
         \centering
         \includesvg[height=5cm,inkscapename=f\thefigure]{svgs/polblogs_benchmark}
         \caption[c]{Performance for Political blogs}
         \label{fig:polblogs_benchmark}
     \end{subfigure}
    \vskip\baselineskip
        \caption{Results of the various projection methods on the real-world networks summarized in Table~\ref{tab:datasets}. The ground truth latitude is marked on the horizontal axis.}
        \label{fig:benchmarks}
\end{figure}

\section{Discussion}\label{sec:discussion}
In this work, we described a hyperspherical geometry on clusterings and showed how this geometry is related to validation measures such as the correlation distance. We then extended this geometry to include vectors that do not necessarily correspond to clusterings and proved that modularity maximization is equivalent to minimizing the distance to some modularity vector over the set of clustering vectors.
In Section~\ref{sec:consequences}, we discussed how this newfound geometry sheds new light on modularity-based community detection methods: it allows us to view the popular Louvain algorithm as a method that projects a \emph{query vector} onto the set of clustering vectors. In addition, this led to a geometric interpretation of the resolution limit. This geometry also suggests a generalization of modularity-based community detection methods, leading to the class of \emph{projection methods} that detect communities by first mapping the graph to a point on the hypersphere, and then projecting this point to the set of clusterings.
In Section~\ref{sec:query_mappings}, we introduced several projection methods that do not correspond to modularity-based methods.
Finally, Section~\ref{sec:experiments} applied our interpretation to real-world networks and demonstrated how our novel projection methods outperforms existing modularity-based methods on several networks.

This work opens up many avenues for future research. In the remainder of this section, we discuss the ones that are most promising in our opinion.

\paragraph{Finding a suitable query vector.}
There are infinitely many possible query mappings and finding the most suitable one is a daunting task. The described geometry allows to split the selection of a query vector into two subtasks: first finding a suitable meridian and then finding a suitable latitude on that meridian. For the first problem, a reasonable approach could be to search for the meridian that minimizes the correlation distance to the ground truth clustering. Since the correlation distance is the angle between the meridians (recall Theorem~\ref{thm:meridian_correlation}), it does not depend on the query latitude that we choose in the second step. However, we have shown in Section~\ref{sec:realworld} that this criterion does not always result in better performances. The problem of finding the optimal latitude on a given meridian seems simpler, as it is one-dimensional.

\paragraph{Projections in other geometries.} We proved that maximizing modularity is equivalent to minimizing the angular distance to a query vector. This begs the question whether minimizing other distances would also give good community detection methods. In particular, it would be interesting to investigate minimizing the correlation distance instead of the angular distance, as this distance would allow one to use the triangle inequality to bound the correlation distance between candidate and ground truth clusterings by their correlation distances to the query vector. Furthermore, since the correlation distance is a distance between meridians (as proven by Theorem~\ref{thm:meridian_correlation}), such methods would be invariant to the choice of the query latitude.

\paragraph{Improved modularity optimization algorithms.}
Finally, as we have proven that the Louvain algorithm is essentially an approximate nearest-neighbor algorithm, it would be interesting to see whether existing approximate nearest-neighbor algorithms outperform the Louvain algorithm in terms of the obtained modularity value or in terms of running time. At any rate, it may be worth-wile to investigate whether the geometry may be utilized to improve upon Louvain or other modularity-optimizing algorithms.

\paragraph{Acknowledgements.}
This work is supported in part by the Netherlands Organisation for Scientific Research (NWO) through the Gravitation {\sc NETWORKS} grant no.\ 024.002.003.

\vskip 0.2in
\bibliography{main}

\appendix

\section{Additional proofs}\label{apx:proofs}
In this appendix, we derive the latitude of the ER modularity vector and prove that the Louvain algorithm is a projection.

\begin{lemma}
    The latitude of the modularity vector for the Erd\H{o}s-R\'{e}nyi null model is given by
    \[
    \tan\ell(\vec{q}_M^{\textrm{ER}}(G;\gamma))=\frac{\sqrt{\frac{N-m_G}{m_G}}}{\gamma-1}.
    \]
\end{lemma}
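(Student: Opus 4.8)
The plan is to substitute the explicit form of the ER null model into the modularity vector and then evaluate the latitude directly from its definition \eqref{eq:latitude}. Since $\vec{p}^{\textrm{ER}}(G)=\tfrac{m_G}{N}\vec{1}$, the modularity vector \eqref{eq:modularity_vector} becomes
\[
\vec{q}_M^{\textrm{ER}}(G;\gamma)=\Big(1-\frac{2\gamma m_G}{N}\Big)\vec{1}+\vec{e}(G),
\]
a vector lying in the two-dimensional span of $\vec{1}$ and $\vec{e}(G)$. First I would record the three inner products needed later: $\|\vec{1}\|^2=N$, then $\|\vec{e}(G)\|^2=N$ (since $\vec{e}(G)$ is a binary vector), and finally $\langle\vec{e}(G),\vec{1}\rangle=2m_G-N$, which counts the $m_G$ edges contributing $+1$ against the $N-m_G$ non-edges contributing $-1$.

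From these, the two quantities appearing in \eqref{eq:latitude} follow by linearity. The inner product with $\vec{1}$ gives $\langle\vec{q}_M^{\textrm{ER}}(G;\gamma),\vec{1}\rangle=2m_G(1-\gamma)$, while expanding the squared norm $\|\vec{q}_M^{\textrm{ER}}(G;\gamma)\|^2$ yields a quadratic in $\gamma$. Rather than carrying $\|\vec{q}_M^{\textrm{ER}}(G;\gamma)\|$ through the whole computation, I would pass immediately to the tangent via $\tan^2\ell=\frac{\sin^2\ell}{\cos^2\ell}=\frac{N\|\vec{q}\|^2-\langle\vec{q},\vec{1}\rangle^2}{\langle\vec{q},\vec{1}\rangle^2}$, which follows from squaring the cosine in \eqref{eq:latitude} and using $\sin^2\ell=1-\cos^2\ell$. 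The advantage is that the norm now enters only through the combination $N\|\vec{q}\|^2-\langle\vec{q},\vec{1}\rangle^2$, which is $N$ times the squared length of the component of $\vec{q}$ orthogonal to $\vec{1}$.

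The crux of the calculation is exactly this combination. After expanding $N\|\vec{q}_M^{\textrm{ER}}(G;\gamma)\|^2=4Nm_G+4m_G^2(\gamma^2-2\gamma)$ and subtracting $\langle\vec{q},\vec{1}\rangle^2=4m_G^2(1-\gamma)^2$, every $\gamma$-dependent term cancels and one is left with the clean expression $4m_G(N-m_G)$. I expect this cancellation to be the one point worth verifying carefully, since all the $\gamma$-structure of the answer must come solely from the denominator $\langle\vec{q},\vec{1}\rangle^2$. Dividing then gives $\tan^2\ell=\frac{N-m_G}{m_G(\gamma-1)^2}$, and taking the square root produces $\sqrt{\tfrac{N-m_G}{m_G}}/|\gamma-1|$. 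Finally I would fix the sign: $\cos\ell$ has the same sign as $-\langle\vec{q},\vec{1}\rangle=2m_G(\gamma-1)$, so $\tan\ell$ and $\gamma-1$ always share a sign, which removes the absolute value and yields the stated formula.
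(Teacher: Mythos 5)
Your proof is correct, and it takes a somewhat different route from the paper's. The paper proves this lemma by specializing the general latitude formula \eqref{eq:latitude_modularity_vector} of Lemma~\ref{lem:modularity_latitude}: it plugs in $\|\vec{p}^{\textrm{ER}}(G)\|^2=m_G^2/N$ and $\langle\vec{p}^{\textrm{ER}}(G),\vec{e}(G)\rangle=2m_G^2/N-m_G$, simplifies the resulting cosine, and then forms $\tan\ell=\sqrt{1-\cos^2\ell}/\cos\ell$ (the sign coming out right because $\sin\ell\geq0$ for $\ell\in[0,\pi]$ and the signed cosine sits in the denominator). You instead work from first principles with the definition \eqref{eq:latitude}: you expand $\vec{q}_M^{\textrm{ER}}(G;\gamma)$ in the span of $\vec{1}$ and $\vec{e}(G)$, compute $\langle\vec{q},\vec{1}\rangle=2m_G(1-\gamma)$ and $N\|\vec{q}\|^2$ directly, and pass to $\tan^2\ell=\bigl(N\|\vec{q}\|^2-\langle\vec{q},\vec{1}\rangle^2\bigr)/\langle\vec{q},\vec{1}\rangle^2$, observing that the numerator — $N$ times the squared length of the component of $\vec{q}$ orthogonal to $\vec{1}$ — collapses to the $\gamma$-free quantity $4m_G(N-m_G)$. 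Your computations check out ($\langle\vec{e}(G),\vec{1}\rangle=2m_G-N$, $\|\vec{e}(G)\|^2=N$, and the stated cancellation all hold), and your explicit sign argument via $\cos\ell\propto 2m_G(\gamma-1)$ together with $\sin\ell\geq0$ is sound. What each approach buys: the paper's proof is shorter because it reuses the already-established Lemma~\ref{lem:modularity_latitude}, so it only needs two ER-specific inner products; yours is self-contained and needs nothing beyond \eqref{eq:modularity_vector} and \eqref{eq:latitude}, and the early passage to $\tan^2$ neatly isolates where all the $\gamma$-dependence must reside. The same cancellation occurs in the paper too, just disguised as the regrouping of the denominator into $\left((\gamma-1)m_G/N\right)^2+m_G(N-m_G)/N^2$.
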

\begin{proof}
    Note that we have $\vec{p}^{\textrm{ER}}(G)=\tfrac{m_G}{N}\vec{1}$ so that $\|\vec{p}^{\textrm{ER}}(G)\|^2=\tfrac{m_G^2}{N}$ and
    \[
        \langle\vec{p}^{\textrm{ER}}(G),\vec{e}(G)\rangle=2\frac{m_G^2}{N}-m_G.
    \]
    We substitute these into \eqref{eq:latitude_modularity_vector} and rewrite the result to
    \begin{align*}
    \cos\ell(\vec{q}_M^{\textrm{ER}}(G;\gamma))&=\frac{(\gamma-1)m_G}{\sqrt{N}\cdot\sqrt{(1-\gamma)m_G+\gamma^2\frac{m_G^2}{N}-\gamma(2\frac{m_G^2}{N}-m_G)}}\\
    &=\frac{(\gamma-1)\frac{m_G}{N}}{\sqrt{\left((\gamma-1)\frac{m_G}{N}\right)^2+\frac{m_G(N-m_G)}{N^2}}}.
    \end{align*}
    Therefore, the tangent is given by
    \[
    \tan\ell(\vec{q}_M^{\textrm{ER}}(G;\gamma))=\frac{\sqrt{1-\cos^2\ell(\vec{q}_M^{\textrm{ER}}(G;\gamma))}}{\cos\ell(\vec{q}_M^{\textrm{ER}}(G;\gamma))}=\frac{\sqrt{\frac{m_G(N-m_G)}{N^2}}}{(\gamma-1)\frac{m_G}{N}}=\frac{\sqrt{\frac{N-m_G}{m_G}}}{\gamma-1},
    \]
    as required.
\end{proof}

\begin{lemma}\label{lem:louvain-projection}
    The Louvain algorithm is a projection. That is, $\mathcal{L}(\mathcal{L}(\vec{q}))=\mathcal{L}(\vec{q})$ holds for any query vector $\vec{q}\in\mathbb{R}^N$.
\end{lemma}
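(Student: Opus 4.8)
The plan is to exploit the fact that $\mathcal{L}(\vec{q})$ is, by construction, a clustering vector $\vec{b}(C)$ for some clustering $C$. Hence the idempotency $\mathcal{L}(\mathcal{L}(\vec{q}))=\mathcal{L}(\vec{q})$ reduces to the single claim that running Louvain on a query that is \emph{itself} a clustering vector returns that very clustering: $\mathcal{L}(\vec{b}(C))=\vec{b}(C)$ for every clustering $C$. First I would translate the optimization target. Since every clustering vector has norm $\sqrt{N}$, minimizing $d_a(\vec{b}(C'),\vec{b}(C))$ over clusterings $C'$ is equivalent to maximizing $\langle\vec{b}(C'),\vec{b}(C)\rangle=N-2D(C',C)$, where $D(C',C)$ counts the vertex-pairs on which $C'$ and $C$ disagree. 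The global maximum is attained uniquely at $C'=C$, with $D=0$.

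The heart of the argument is a structural lemma: \emph{$C$ is the unique local optimum of this objective under the single-vertex moves that Louvain performs} (reassigning one vertex to another cluster, or isolating it as a singleton). To see this, note that placing a vertex $v$ whose $C$-cluster is $S$ into a cluster with current members $Y'$ changes the objective by an amount proportional to $|Y'\cap S|-|Y'\setminus S|$, i.e.\ the number of true clustermates in $Y'$ minus the number of non-clustermates, where the singleton option scores $0$. I would then argue in two steps. (i) In a local optimum no cluster is \emph{mixed}: if a cluster $X$ met two or more $C$-clusters, some represented $C$-cluster would occupy at most half of $X$, and a vertex $v$ of that minority would have staying-value $2|X\cap S|-|X|-1\le-1<0$, so isolating $v$ strictly improves the objective, a contradiction. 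Hence every local optimum refines $C$. (ii) If a refinement splits some $C$-cluster $S$ into pieces all contained in $S$, a vertex in a smaller piece $X_j$ gains $|X_i|-(|X_j|-1)\ge1>0$ by moving to a largest piece $X_i$, again contradicting local optimality. Therefore the only configuration admitting no improving single-vertex move is $C$ itself.

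With this lemma in hand I would finish as follows. Louvain's local-moving phase starts at the fine pole and performs strictly improving single-vertex moves; since the objective is bounded and there are finitely many clusterings, the phase terminates, and it can only terminate at a local optimum, which by the lemma is $C$. The subsequent aggregation phase contracts each $C$-cluster to a super-node; any move now merges two whole $C$-clusters, turning inter-$C$ pairs into intra-cluster pairs and thereby strictly \emph{decreasing} $\langle\,\cdot\,,\vec{b}(C)\rangle$, so no further move improves and the algorithm halts at $C$. This yields $\mathcal{L}(\vec{b}(C))=\vec{b}(C)$ and hence the projection property.

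The main obstacle is the structural lemma of the second paragraph: establishing that $C$ is the \emph{unique} local optimum, and not merely a fixed point that also happens to be the global optimum. Because Louvain is greedy, it is guaranteed to reach $C$ only if there is no competing local optimum in which it could become trapped. Care is also needed to ensure that the improving moves exhibited lie within Louvain's admissible move set, namely reassignments to a cluster containing a $C$-neighbor of $v$ or isolation into a singleton, which the sign computation above respects.
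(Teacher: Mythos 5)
Your proof is correct, but it takes a genuinely different route from the paper's. The paper reduces to the same claim $\mathcal{L}(\vec{b}(C))=\vec{b}(C)$ and then \emph{tracks the greedy trajectory}: starting from the all-singleton initialization, the first vertex of a cluster $c\in C$ that the iteration encounters joins a cluster-mate, each subsequently encountered vertex of $c$ greedily joins the growing sub-cluster of $c$ (which always beats joining singletons or foreign clusters), and so the first local-moving pass reassembles $C$ cluster by cluster. You instead prove a \emph{static} structural fact — that $C$ is the unique clustering admitting no improving single-vertex move, via your two-step argument (no local optimum has a mixed cluster; no strict refinement of $C$ is a local optimum) — and then only use the dynamics through the generic facts that the local-moving phase terminates and can only terminate at such a local optimum, after which the aggregation phase is inert. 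Your approach buys robustness: it is independent of the initialization and of the order in which vertices are visited, and it explicitly disposes of the aggregation phase, which the paper's proof leaves implicit; it also yields a slightly stronger insight (uniqueness of the local optimum) that the paper never states. The paper's trajectory argument is shorter and more elementary, but it leans on the fine-pole initialization and on informal bookkeeping of what the configuration looks like when each vertex is first encountered. Your closing caveats are well placed: the admissibility of the two improving moves (isolation, and moving to a cluster containing a $C$-cluster-mate, i.e.\ a positive entry of the query vector) is exactly what needs checking, and your sign computations do respect it.
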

\begin{proof}
We equivalently prove that for any clustering $C$, Louvain maps the query vector $\vec{q}=\vec{b}(C)$ to itself, i.e., $\mathcal{L}(\vec{b}(C))=\vec{b}(C)$.
Recall that the Louvain algorithm is initialized at the fine pole, i.e., the clustering consisting of $n$ singleton clusters. Then, it iterates through all vertices and relabels each vertex greedily. That is, a vertex is assigned to the cluster that results in the largest decrease in the angular distance to the query vector, or equivalently, the largest increase of $\langle\vec{b}(C'),\vec{b}(C)\rangle$, where $C'$ is the new candidate clustering after relabeling.

Let us consider one cluster $c\in C$. The first time that the iteration of Louvain encounters a vertex $i\in c$, all of its cluster-members are assigned to singleton clusters, so that $i$ is relabeled to the cluster of any of its cluster-members $j\in c\setminus\{i\}$ arbitrarily.
Then, the next time a vertex $k\in c\setminus\{i,j\}$ is encountered in the iteration, the greedy choice is to relabeled $k$ the cluster $\{i,j\}$, as it results in a larger increase than relabeling $k$ to any of the singleton clusters of $c\setminus\{i,j,k\}$. Similarly, all other vertices of $c$ are relabeled to this cluster so that the resulting Louvain candidate contains the cluster $c$. In the same way, all other clusters $c'\in C$ are obtained so that indeed $\mathcal{L}(\vec{b}(C))=\vec{b}(C)$.
\end{proof}

\section{Heatmap of dolphins network}\label{apx:dolphins}
Figure~\ref{fig:dolphins_heatmap} shows a similar experiment as in Section~\ref{sec:beyond_null} performed on Lusseau's network of bottlenose dolphins~\citep{lusseau2004identifying}. Similar to the karate network, this experiment shows that there is a small region of query vectors for which perfect recovery is achieved. 
This time, this is found on meridians that correspond to CM modularity for a negative resolution parameter: one of these query vectors is given by $\vec{q}=\mathcal{P}_{0.58\pi}(\vec{q}_M^{\textrm{CM}}(G;-0.2))$, i.e., CM modularity with resolution $\gamma=-0.2$ projected to the latitude $0.58\pi$.
Note that if we were to try and interpret this query vector in terms of a null model, the expected number of edges between two vertices $i,j$ would be of the form $c_1-c_2d_i^{(G)}d_j{(G)}$ for $c_1,c_2>0$. For some vertex-pairs, this value may be negative, so that it cannot be interpreted as an expected number of edges. The reason that this query vector performs well on this network seems to be that all high-degree vertices are part of the same community. Therefore, vertex-pairs for which $d_i^{(G)}d_j^{(G)}$ is high are more likely to be community members.

\begin{figure}
    \centering
    \includegraphics[width=0.7\textwidth]{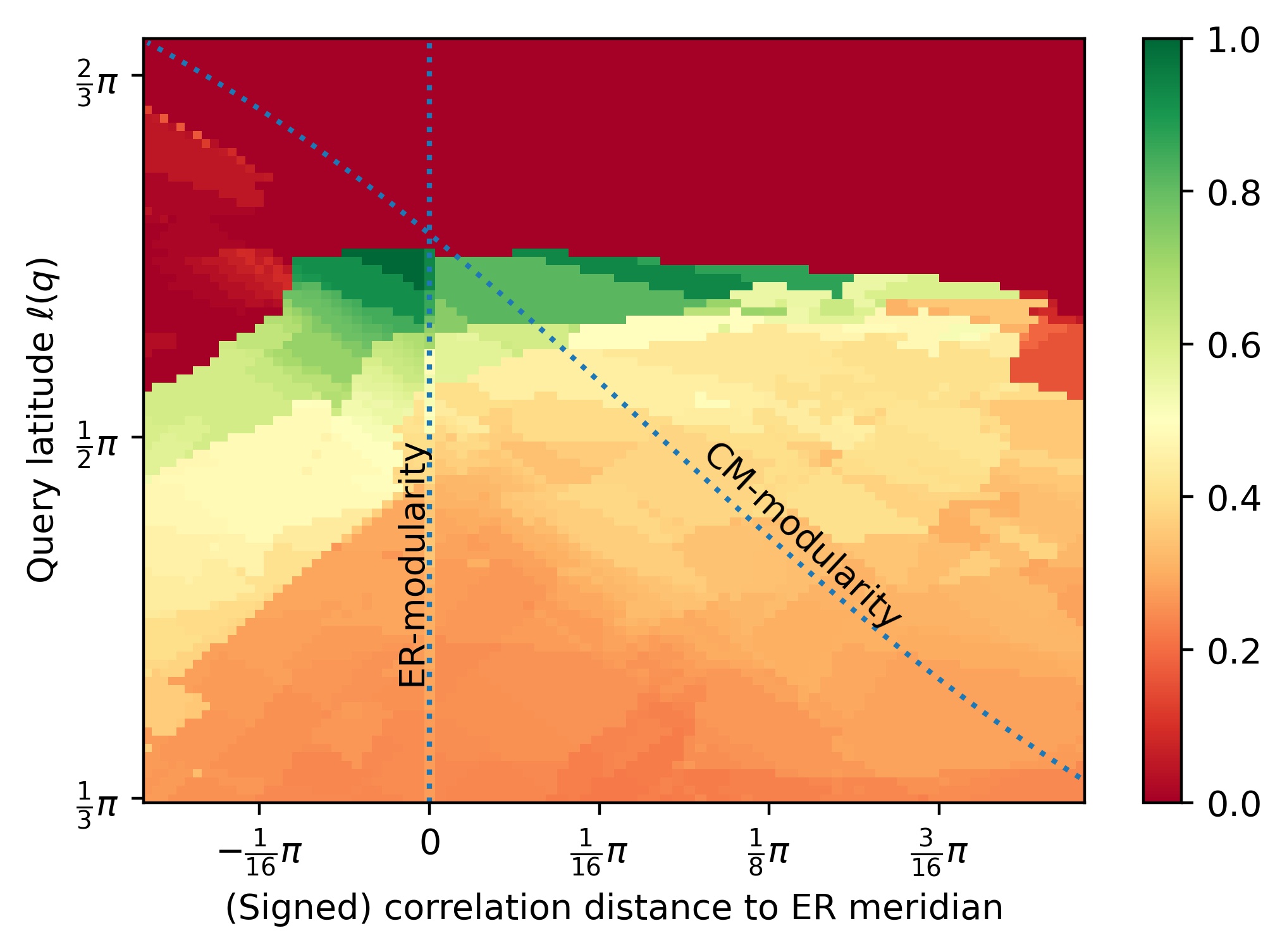}
    \caption{Heatmap of the correlation between ground truth and candidate clusterings for the Dolphins network~\citep{lusseau2004identifying}. We take query vectors from the meridians that $\left(\vec{q}_M^{\textrm{CM}}(G;\gamma)\right)_{\gamma\in[-1.5,2]}$ runs through and vary the query latitude between $\tfrac{1}{3}\pi$ and $\tfrac{2}{3}\pi$. The horizontal coordinates are given by $\textrm{sgn}(\gamma)\cdot\dCC(\vec{q},\vec{e}(G))$, i.e. the (signed) correlation distance to the ER meridian.}
    \label{fig:dolphins_heatmap}
\end{figure}

\end{document}